	\newtheorem{theorem}{Theorem}[section] 
	\newtheorem{lemma}[theorem]{Lemma} 
	\newtheorem{proposition}{Proposition}
	\newtheorem{corollary}{Corollary}[proposition]
	\def\ni{\noindent}
	\def\ph{{\phantom{...}}}
	\def\={\phantom{..} = \phantom{..}}
	\def\+{\phantom{..} + \phantom{..}}
	\def\>{\phantom{..} > \phantom{..}}
	\def\<{\phantom{..} < \phantom{..}}
	\def\-{\phantom{..} - \phantom{..}}
	\def\bq{\begin{quote}}
		\def\eq{\end{quote}}
	\def\be{\begin{equation}}
		\def\ee{\end{equation}}
	\def\bar{\begin{eqnarray}}
		\def\ear{\end{eqnarray}}
	\def\no{\nonumber}
	\def\pa{\partial}
	\def\Gartner{G{\"a}rtner}
	\def\wfs{wavefunctions}
	\def\pN{\prod_{j=1}^N}
	\def\Plb{P\left[\,}
	\def\Prb{\,\right]}
	\def\sumS{{\sum_S\,|\psi|^2(S)}}
	\def\sumi{{\sum_{i=1}^N}}
	\def\sumi{\underset{i=1}{\overset{N}{\sum}}}
	\def\sumk{\underset{k=1}{\overset{N}{\sum}}}
	\def\ket{\right\rangle}
	\def\bra{\left\langle}
	\def\verti{\left|\right.}
	\newcommand{\dverti}[1]{\left| #1 \right|}
	\def\@email#1#2{%
		\endgroup
		\patchcmd{\titleblock@produce}
		{\frontmatter@RRAPformat}
		{\frontmatter@RRAPformat{\produce@RRAP{*#1\href{mailto:#2}{#2}}}\frontmatter@RRAPformat}
		{}{}
	}%
\begin{document}
		
		\preprint{AIP/123-QED}

		\title[]{On blocking Dispersion of Matter by Energy conservation}
		\author{Leonardo De Carlo}
		\email{leonardo\_d3\_carlo@protonmail.com}
		\affiliation{ 
			CopeLabs, Universidade Lusófona, Campo Grande 388, Lisboa, Portugal
		}
		\affiliation{
			Grupo de Física Matemática, Inst. Superior Técnico, Av. Rovisco Pais 1, Lisboa, Portugal 
		}%
		
		
		\date{\today}
		
		\begin{abstract}
			In [L. De Carlo and W. D. Wick, Entropy 25, 564 (2023)], we considered the problem of wavefunction ensembles for spin models. To observe magnetization at finite temperature, we had to add macroscopic nonlinear terms that suppress macroscopic superpositions by energy conservation. The nonlinear terms were of the kind introduced in [W. D. Wick, arXiv:1710.03278 (2017)] to block spatial cats by energy conservation, proposed as a solution to the Measurement Problem. Reviewing this theory, I derive commutation relations that these nonlinear terms have to satisfy to be physically admissible, and show that the ones confining the wavefunction in space do indeed satisfy these relations. I generalize the terms of [L. De Carlo and W. D. Wick, Entropy 25, 564 (2023)] for non-pure spin models and check if they also satisfy the constraints, concluding with a negative answer and possibly giving an interpretation of the previous results. With a toy model I present the main experimental idea, that is that forming spatial cats meets an energy barrier. A comparison with Collapse Models is at the end.
			
			\vspace{0.2cm}
			\noindent{\em Keywords}: Quantum-Classical Boundary, Wave-function Physics.
		\end{abstract}

		\maketitle

		\section{Introduction}\label{sec:intro}
		
		In recent years creating objects in macroscopic superpositions is becoming an  important experimental research topic \cite{ArnHorn,Q-Tube,Q-Xtreme}. The main quest is if the  laws of quantum physics hold for massive objects of arbitrary size. A spatial macroscopic superposition is an  object delocalized over a distance comparable to its spatial extent.  While Newtonian physics is restored if the spatial dispersion of observed objects stays small on the scale of variation of external potentials. Therefore  a possible goal is that of  locating the classical-quantum boundary. If such boundary exists, one can expect  superposition principle breaking down. Famous models to address the question are collapse models, see a recent review in \cite{Collapse}. Here I present an alternative idea, that is introducing macroscopic terms blocking cat states by energy conservation. 
		For a discussion and critiques of current paradigms on the classical-quantum boundary, see \cite{HanceHossenfelder2022}.
		
		\vspace{0.5cm}
		
		The idea introduced in \cite{MP1}  shares some features with collapse models, that is  adding macroscopic non-linear terms becoming active at large enough scale, but different in the physical framework,  i.e.  macroscopic superpositions  are ``penalized'' by energy conservation. Meaning that  the classical-quantum boundary will  not be only a function of the size   but also of energy.  A comparison between the two models is made in Section \ref{sec:comparison}.

		\vspace{0.5cm}
		
		The adopted principles to modify the Schr\"odigner's equation are :
		\begin{itemize}
			\item[a)] The modification has to be negligible at microscopic level but it becomes large at macroscopic level to block macroscopic dispersion in space because of ``configurational cost'';
			\item[b)] The norm of $ \psi $ and the energy of a closed system have to be conserved ;
			\item[c)] No extra terms are added to the evolution of the center of mass $ X(\psi)= \bra  \psi \dverti{\frac{1}{N}\underset{i=1}{\overset{N}{\sum}}X_i}    \psi \ket   $  of a closed system.
		\end{itemize}

		With respect to \cite{DeCarlo-Wick} I add 
		\begin{itemize}
			\item[d)] The dynamics of a  system, coupled with a microscopic one``to measure'',  has to display chaos.
		\end{itemize}
		
		To maintain concise notation, I consider the one-dimensional case and I introduce what they   called\cite{MP1} \emph{Spatial Dispersion} of a wavefunction: 
		\begin{equation}\label{eq:sD}
			D_X(\psi, \psi^*)= \bra \psi \dverti{\frac{1}{N^2}\left(\,\overset{N}{\underset{i=1}{\sum}}\,X_i - \bra\psi\dverti{\overset{N}{\underset{i=1}{\sum}}\,X_i}\psi\ket\,\right)^2}  \psi\ket.
		\end{equation}
		
		Calling $r$ the spatial extension of an  object (the support of $\phi_r$) and $R$ a comparable or larger distance,  consider two wavefunctions representing the  system $i=1,\dots, N$:
		\begin{equation}\label{eq:MQP}
			\pN\left\{ \frac{1}{\sqrt{2}} \phi_r(x_j+R) + \frac{1}{\sqrt{2}} \phi_r(x_j-R)\right\}
		\end{equation}
		and
		\begin{equation}\label{eq:spcat}
			\frac{1}{\sqrt{2}}\pN \phi_r(x_j+R) + \frac{1}{\sqrt{2}}\pN \phi_r(x_j-R).
		\end{equation}
		\ni The first states represent  a "macroscopic quantum phenomena" (MQP), where many particles are in superposition but they are not a macroscopic superposition, while the second state are  the relevant ones in the quantum theory of measurement \cite{Leggett}, this can be seen plugging \eqref{eq:MQP} and \eqref{eq:spcat} in \eqref{eq:sD}, a simple example is given by using a normalized characteristic function $ \phi_r(x) $. The first ones gives $D_X=O(R^2/N)$, while the second ones gives $D_X=O(R^2)$, namely the spatial dispersion of \eqref{eq:MQP} is small, while for  \eqref{eq:spcat} it is macroscopic: the object has a ``macroscopic dispersion'' forming a spatial cat. Wavefunction \eqref{eq:MQP} has the form of a distribution for i.i.d. random variable, so $D_X=O(R^2/N)$ is a central limit theorem behavior. For  something like $ 	\pN \phi_r(x_j) $ it is $ D_X(\psi)= O\left(\frac{r^2}{N}\right) $.
		The idea to satisfy  a), b) and c)  is exploiting the behavior of \eqref{eq:sD} to forbid the states \eqref{eq:spcat} by energy conservation  rewriting the evolution of $\psi  $ in following the Hamiltonian form 
		\begin{equation}
			\begin{split}
				i\hbar\frac{\partial \psi}{\partial t} &= \frac{\partial}{\partial\psi^* }E(\psi), \text{ with } E(\psi) = E_{QM}(\psi) + E_{WFE}(\psi),\\
				E_{WFE}(\psi) &= w N^2 D_X(\psi, \psi^*)
			\end{split}
		\end{equation}
		where `$ w $' is a very small positive constant and  the term $ E_{WFE}(\psi) $ is named \emph{"Wavefunction-Energy"}(WFE). What they introduced\cite{MP1} is 
		a universal self-trapping to forbid macroscopic dispersion and restore  a ``classical'' behavior.   In Sections \ref{sec:derivationcom} and \ref{sec:HWMeasurment}, I explain how it satisfies a), b) c), deriving some general conditions for  terms like $  w N^2 D_X(\psi, \psi^*) $, which can be tested on other cases. Indeed another interesting case, and arguably  a better physical choice, is found. 

		\vspace{0.5cm}
		
		In \cite{DeCarlo-Wick}  (looking for a wavefunction thermodynamics) we did the ``experiment'' to consider wavefunction ensembles for pure spin models, that is  we evaluated  probability  generating functions of  the form:
		\begin{equation}\label{eq:Z}
			Z_N \=\frac{1}{\Sigma_N} \int_{||\psi|| = 1}\,d\psi\,\exp
			\left\{\,- \beta\,E_N(\psi) 
			\,\right\},
		\end{equation}
		where $||\psi||^2 = \underset{s}{\sum}\,|\psi(s_1,...,s_N)|^2$ and  $ s=\left\{s_1,\dots,s_N\right\} $ a spin configuration . 
		The integral is over the unit sphere, normalized to its volume $\Sigma_N$. The energy is	$E_N(\psi) =  <\psi|\,E(s)\,|\psi>=  \underset{s}{\sum}\,|\psi(s_1,...,s_N)|^2\, E(s),$
		with $ E(s) $ a   spin magnetic energy of order $ N $ in the configuration $ s $, as Curie-Weiss or Ising.  The question was to understand if it was possible to construct models  exhibiting a spontaneous magnetization in the thermodynamics limit  for sufficiently low temperature, i.e. ensembles where 
		\begin{eqnarray}\label{eq:mag}
			\underset{N\rightarrow\infty}{\lim}&[\,m^2(\psi)\,]_{\beta}  
			\= \underset{N\rightarrow\infty}{\lim}\,\, \frac{1}{Z_N} \frac{1}{\Sigma_N}\int_{||\psi|| = 1}\,\,d\psi\,\exp\{ - \beta\,E_N(\psi)\,\}\,m^2(\psi) >0, \text{ for $\beta$ large enough}, \no\\
			&m(\psi) = \bra\psi\dverti{\,\underset{i}{\sum}\,\frac{s_i}{N}}\psi\ket = \underset{s}{\sum}\,|\psi(s)|^2\left(\,\underset{i}{\sum}\,\frac{s_i}{N}\,\right).\no
		\end{eqnarray}
		
		We concluded that, to observe this, two physical ingredients were necessary. The first was to  consider  indistinguishable spin variables, namely imposing exchange symmetry on the wavefunction $\psi(s_1,\dots,s_N)$. This means that the high dimensionality $2.2^N$ of the Hilbert space suppresses such phenomena in the distinguishable spins case.  The second was to introduce a  nonquadratic term, in  addition to the energy $E_N(\psi)$, that penalizes large superpositions in large objects. The term was  of the form
		\begin{equation}\label{eq:genwfe}
			E_{WFE}(\psi)= wN^2 D(\psi),\,\, D(\psi):=  \bra\psi \dverti{\left(\,\frac{1}{N}\overset{N}{\underset{i=1}{\sum}}\,O_i - \bra\psi\dverti{\,\frac{1}{N}\overset{N}{\underset{i=1}{\sum}}\,O_i\,}\psi\ket\,\right)^2} \, \psi \ket,
		\end{equation}
		where  $O_i=s_i$. In appendix \ref{app:proof}, I report  the construction of the type of models having a phase transition at finite temperature and the main theorem proved in \cite{DeCarlo-Wick} with its condensed proof.

		At this point I declare the question I pose here. The natural extension of $ O_i =s_i $ to wavefunctions $ \psi(x)\psi(s) $ is $ O_i=L_i+s_i $, therefore I will check in section \ref{sec:Jcase} if this case satisfies the constraints a),b),c). The conclusion is negative.  These terms were proposed to affect only macroscopic objects (and  otherwise be too small to matter). 
		Since a magnetic field can move a magnetometer needle, it might be considered macroscopic. A possible interpretation is that the non-linear terms added directly to the spin system in \eqref{eq:Z} are a way to evade modeling the magnetometer. Another possibility might be considering a model of magnet with both spatial and spin coordinates, discretizing  wavefunction states on a lattice with space `$a$' in presence of both spin interactions and WFE,  to explore the indirect role of a spatial WFE on the spin coordinates. But this is  a future  research program.

		\vspace{0.5cm}
		The problem of wavefunction ensembles was introduced in \cite{ES1,ES2,Bloch} and has been revisited in recent time in \cite{Campisi,Jona-Presilla,lebowitz,Anza}. In particular, in \cite{Brody} they concluded that it would have been interesting  exploring non-linear modification of Schr\"odinger's evolution in such nonconventional ensembles.

		\vspace{0.5cm}

		In Section \ref{sec:HWMeasurment}, I describe an ideal experiment to test WFE and and show some rough orders of magnitude of $w$ using common orders of magnitudes .  Given the lack of the realization of this ideal experiment,  I can only describe qualitatively what one should expect to observe.  In Section \ref{sec:HWMeasurment},  I apply to a toy model some analytical conditions, developed in \cite{MP3,ContinuumChaos}, that are sufficient to establish the existence of expanding and contracting directions at a given threshold. A final comparison  with Collapse models is made in Section  \ref{sec:comparison}. 
		
		\vspace{0.5cm}

		This document wants to present previous work self-consistently to interpret the results of \cite{DeCarlo-Wick}, trying to give some estimates for  the order of the coupling constant $w$, and provide a comparison with Collapse models. The reader should distinguish between material already present in the literature (Sections \ref{sec:intro}, \ref{sec:derivationcom}, and the toy model in Section \ref{sec:HWMeasurment}) and the new additions provided in this work. These additions are the commutation relations, Section \ref{sec:Jcase}, the latter part of Section \ref{sec:HWMeasurment} (after Eq. \eqref{eq:DCI}), and Section \ref{sec:comparison}.

		\vspace{0.5cm}
		
		\textbf{Abbreviations and Notation.} The following abbreviations are used in this manuscript:\\
		
		\begin{tabular}{@{}ll}
			
			COM & center-of-mass\\
			MQP & macroscopic quantum phenomena\\
			WFE & Wavefunction Energy\\
			X-WFE &  Wavefunction energy on Position\\
			P-WFE & Wavefunction energy on Momentum 
			\\
			The symbol $ \langle \,\,\, |\,\,\,\rangle$ denote the scalar product in a $ L_2(\Omega)$-space. 
		\end{tabular}

		\section{ The proposal of blocking cat states by energy conservation}\label{sec:derivationcom}

		\subsection{Hamiltonian macroscopic modifications} \label{sec:ModwithWFE}
		
		Here, along the line of \cite{MP1}, I  explain how    properties a), b) and c) are satisfied.   Consider the  evolution of the  state $ \psi $ given by 
		\begin{equation}\label{eq:nonlin2}
			i\hbar\frac{\partial \psi}{\partial t}=\frac{\partial}{\partial\psi^* }E(\psi),\text{ with } E(\psi) \= E_{QM}(\psi) \+ E_{WFE}(\psi),
		\end{equation}
		where $ w $ is a very small positive constant. The $ O_i $ will have to be self-adjoint operators such that $\left(\overset{N}{\underset{i=1}{\sum}}\,O_i  \right)^2$ is self-adjoint too. (In \cite{DeCarlo-Wick} we made  a typo in the final printing writing ``where $w$ is a very small constant and the $O_i$ ’s are self-adjoint operators diagonal in the
		same base as $H_{QM}$ ''. This was the correct statement.) To have c),  two  restricting conditions  on some commutators between the operators  $X_k, i\frac{\partial}{\partial x_k}$ of the  particle $k$ and the family $\{O_i\}_i$ need to be imposed. These conditions  seem to restrict the choice between $O_i=X_i$ and $O_i=P_i$.   Each $O_i$ acts on the $i$-argument of the wavefunction $\psi(x_1,\dots,x_n)$. 

		\subsection{Properties of Wavefunction Energy}
		
		\begin{itemize}
			\item[\textbf{a)}]  It follows from having  $w$ very small  and  an energy scaling as $N^2$ on spatial cat states,  for example taking $O_i=X_i$ as described after \eqref{eq:sD}. Given an initial state $\psi$ of dispersion such that  $E_{WFE}(\psi)= wN^2D_X(\psi) \sim O(wN)$,  when $N$ becomes large, \eqref{eq:nonlin2}   can not evolve into \eqref{eq:spcat} since it is a Hamiltonian evolution (see \eqref{eq:hamevo}) and energy must be conserved: on cat states $ E_{WFE}(\psi)= wN^2D_X(\psi) $ scales with order  $N^2$, therefore they  will become rapidly too expensive to be created. To visualize this, let's consider some illustrative orders of magnitude: with $w \sim 10^{-25}\,\, J/m^2$, $N\sim 10^{20}$ and $R\sim 1 \,\,cm$ , for a cat we would have $E_{WFE} \sim wN^2R^2=10^{11} J  $, while for an initial product state \eqref{eq:MQP}  with  $R'= 1\mu m$ we would have $E_{WFE}=wR'^2N= 10^{-17}J$. So the energy for a cat can not be supplied by the initial state itself, unless enough energy is supplied from an external potential. While for an hydrogen atom $E_{WFE}\sim 10^{-25} (6 \cdot 10^{-11})^2\sim 10^{-47} J$. 
			
			This rapid threshold is at the core of the estimate proposed in section \ref{sec:HWMeasurment}, starting from  an illustrative toy model.  
			
			\item[\textbf{b)}]  I need to consider $\mathbf{E}:=E(\psi,\psi^*)$ as functional of $\psi$ and $\psi^*$ to observe  the symplectic structure of \eqref{eq:nonlin2}. Defining the coordinates $Q:= (\psi+\psi^*)/2$ and $P:=(\psi- \psi^*)/2i$.  I have an Hamiltonian system where $E(\psi,\psi^*)=E(Q,P)$.  The evolution becomes 
			\begin{equation}\label{eq:hamevo}
				(i\hbar)\frac{d\,\,F(\psi,\psi^*)}{dt}= \{F,E\}_W(\psi,\psi^*),\,\,\left\{ F,  G\right\}_W(\psi,\psi^*)= \left\langle  \frac{\partial F}{\partial \psi} \frac{\partial G}{\partial \psi^*} -   \frac{\partial G}{\partial \psi} \frac{\partial F}{\partial \psi^*} \right\rangle,
			\end{equation}
			for any pair of functional $F(\psi, \psi^*)$ and $G(\psi, \psi^*)$, where  `W' stays for wavefunction Poisson bracket $\bra{\cdot, \cdot}\ket$, replacing the usual Poisson bracket after observing  $\{\cdot,\cdot \}=\frac{2}{i}\{\cdot,\cdot\}_W$ .  
			Respectively  $ \bra \cdot \ket $ and $\frac{\partial}{\partial\psi}$ denote an integration and a functional derivative  for continuous arguments and a sum  and derivative for   discrete arguments. One gets the commutator of quantum mechanics when $F(\psi,\psi^*)$ and $G(\psi,\psi^*)$ are quadratic, i.e. $\langle \psi |F|\psi \rangle$     and $\langle \psi |G|\psi \rangle$, defined by $F$ and $G$ self-adjoint: $\left\{F(\psi), G(\psi)\right\}_W= \langle \psi |[F,G]|\psi \rangle$, where $[F,G]=FG-GF$. 
			
			From \eqref{eq:hamevo} it follows $(i\hbar)\frac{d\,\,E(\psi,\psi^*)}{dt}= \{E,E\}(\psi)=0$. Next I  show that $\|\psi\|$ is conserved.
			Taking $F(\psi)=\|\psi\|^2$, from \eqref{eq:hamevo} I have
			\begin{equation*}
				\frac{\partial}{\partial t } \left\langle \psi \middle| \psi \right\rangle = \frac{1}{i\hbar}\left[ \left\langle \psi \middle| \frac{\partial \mathbf{E}}{\partial\psi^*} \right\rangle-\left\langle  \frac{\partial \mathbf{E}}{\partial\psi} \middle|\psi \right\rangle\right],
			\end{equation*}
			this is zero for any $\psi$  if  
			\begin{equation}\label{eq:zerocond}
				\frac{\partial\mathbf{E}_{WFE}}{\partial\psi^*}= \mathcal{O}(O_1,\dots,O_N;\psi,\psi^*)\psi,
			\end{equation}
			with $\mathcal{O}(O_1,\dots,O_N;\psi,\psi^*)$ self-adjoint operator possibly depending on $\psi,\psi^*$ in some fashion. Now I verify this condition for \eqref{eq:genwfe}. I have:
			\begin{equation}\label{eq:WFEder}
				\frac{\partial\mathbf{E}_{WFE}}{\partial\psi^*}= \left\{ w\left(\overset{N}{\underset{i=1}{\sum}}\,O_i\,\right)^2 - 2 w \left\langle\psi\left|\,\overset{N}{\underset{i=1}{\sum}}\,O_i\,\right|\psi\right\rangle \left(\,\overset{N}{\underset{i=1}{\sum}}\,O_i \right)\right\}\psi
			\end{equation}
			the term inside curly bracket satisfies \eqref{eq:zerocond} when $O_i$ are self-adjoint and $\left(\overset{N}{\underset{i=1}{\sum}}\,O_i\,\right)^2$ too.
			
			\item[\textbf{c)}] The equation of the center of mass(COM) has to remain unchanged, meaning no additional terms are added to Newton's equation. This imposes extra constraints on the $O_i$'s, which are expressed by certain commutators involving $X_k$ and $-i\partial_k$ of $ k$-th particle.
			I show the computations for  the one-dimensional case, but everything generalizes  introducing more notation.   I consider a microscopic system of coordinate $y$ entangled with a ``macroscopic'' system made by $N$ particle $(x_1, \dots, x_N)$, to keep in mind situations of interest in  the Measurement Problem, where one could  expect the whole system evolving into macroscopic superpositions. I call `$V$' the potential energies of the large system and `$U$' the interaction energy with the microscopic system $y$.    For convenience sometimes I will use $ F(\psi):= F(\psi,\psi^*)$.
			
			First,  from the linear part of \eqref{eq:nonlin2}
			\begin{equation}\label{eq:ScheqManyB}
				i\hbar\frac{\partial\psi}{\partial t}= - \left(\frac{\hbar^2}{2m}\right)\underset{i=1}{\overset{N}{\sum}} \Delta_i \psi -\left(\frac{\hbar^2}{2m}\right) \Delta_y \psi + V(x_1, \dots, x_N)\psi + U(x_1,\dots,x_N, y )\psi := H\psi.
			\end{equation} 
			
			I want to derive $\ddot{X}(\psi)=\frac{d^2}{dt^2}\left\langle\psi\left|\frac{1}{N}\overset{N}{\underset{i=1}{\sum}}\,X_i\right|\psi\right\rangle$. For practical  purpose I will use $X$ in place of $X(\psi)$ and $x$ in place of $\frac{1}{N}\sum_{i=1}^{N}x_i$.  I consider $U= \overset{N}{\underset{i\neq j}{\sum}}u(x_i-x_j)+ \tilde{u}\left(X,y\right)$ and $V=\underset{i=1}{\overset{N}{\sum}} v(x_i) $, where $\tilde{u}$ describe the interaction between the COM and the entagled particle. Following the steps of Appendix \ref{app:Newton} we get 
			\begin{equation}\label{eq:finalNew}
				M\ddot{X} = - \sumk \bra \psi | \pa_k v (x_k) |\psi\ket - \sumk \bra \psi | \pa_k \tilde{u} (X,y) |\psi\ket  .
			\end{equation}
			
			First note that $\hbar$ cancels out, meaning I don't relate a classical motion to $\hbar\rightarrow 0$, instead I observe it happens in  \eqref{eq:finalNew}  when the following approximation holds: 
			\begin{equation}\label{eq:clasapprox}
				N\left(- \frac{1}{N}\sumk \bra \psi | \pa_k v (x_k) |\psi\ket\right) \approx - N \partial_{\text{x}} v(X) := - \partial_{\text{x}} V_E(X),  
			\end{equation}
			quoting \cite{MP1} "\emph{the factor of N on the right is absorbed into a “macroscopic”
				external potential energy, rendering it extensive}".
			Namely up to negligible errors, in the LHS of \eqref{eq:clasapprox}, the average sum over  $k$ and the integral over space can be pushed inside the $\partial_k v$.  The first simplification is also present  in classical physics $\frac{1}{N}\sum_k \partial_k v(x_k)\approx \partial_{\text{x}}v(x)$,  and it holds when $\pa_{\text{x}} v$  varies little on the scale of the object. Meanwhile, the second one is purely wave-mechanical, consisting  in $\bra  \psi \dverti{\partial_{\text{x}} v (x_i)}\psi \ket \approx  \pa_{\text{x}} v(\bra\psi\dverti{x_i}\psi\ket)\approx  \pa_{\text{x}} v(\bra\psi\dverti{x}\psi\ket)$. This can happen if the spatial dispersion \eqref{eq:sD} of the observed objects is very small on the scale of variation of external potentials.
			Consequently,  during the dynamics  \eqref{eq:hamevo}  the quantity \eqref{eq:sD} has to maintain small.

			To complete part c), I still need to examine the conditions on \eqref{eq:genwfe} and identify some interesting cases that satisfy it. Again, I study $F(\psi,\psi^*)=\bra\psi \dverti{X_k}\psi \ket $ and look at 
			\begin{equation*}
				\frac{d}{dt}\bra\psi |X_k|\psi \ket = \left[- \frac{i}{\hbar}\right]\bra \partial_\psi F \partial_{\psi^*}E - \partial_{\psi}E\partial_{\psi^*} F\ket, \text{ where } E(\psi,\psi^*)=  \bra \psi \dverti{\left(\,\overset{N}{\underset{i=1}{\sum}}\,O_i\right)^2}  \psi\ket - \left(\bra\psi \dverti{\overset{N}{\underset{i=1}{\sum}}\,O_i}\psi\ket\,\right)^2. 
			\end{equation*}
			Introducing $O_i(\psi):= \bra \psi \dverti{O_i}\psi \ket $, I have
			\begin{align}
				\langle \partial_\psi F \partial_{\psi^*}E - \partial_{\psi}E\partial_{\psi^*} F\rangle &= \bra \psi \left| X_k \left[\left(\sum_{i=1}^N O_i   \right)^2 - 2 \left(\sum_{i=1}^N O_i(\psi) \right)\sum_{i=1}^N O_i \right]\right|\psi\ket \label{eq:cancellation}\\
				&\quad - \bra \psi \left| \left[\left(\sum_{i=1}^N O_i   \right)^2 - 2 \left(\sum_{i=1}^N O_i(\psi) \right)\sum_{i=1}^N O_i \right] X_k \right|\psi\ket  = A - B. \nonumber
			\end{align}
			
			I need that $A=B$. Which means that $X_k$ and $\left(\sumi O_i   \right)^2 - 2 \left(\sumi O_i(\psi) \right)\sumi O_i$ have to commute in expectation. Clearly this will be true for $O_i= X_i$ as in \eqref{eq:sD}, but I want to find some specific conditions for  general $O_i$,  not necessary commuting with $X_i$.  This allows us to find probably a more subtle case.  From $\left(\sumi O_i \right)^2 $ I have the problematics terms $O_k^2 + 2 \underset{j:j\neq k }{\sum} O_j O_k $ and from $\sumi O_i$ just $O_k$  ($O_i(\psi)\in \mathbb{R}$). 
			So $A-B=0$ reduces to asking
			\begin{equation}\label{eq:Magicanc}
				\bra \psi \dverti{X_k G(O_1,\dots,O_N) - G(O_1,\dots,O_N) X_k}	  \psi \ket=0, \text{ where } G(O_1,\dots,O_N) =  O^2_k  +2 \underset{j:j\neq k}{\sum}O_jO_k  - 2 \left(\sumi O_i(\psi) \right) O_k.
			\end{equation}
			Cancellation \eqref{eq:Magicanc} is also satisfied by the relevant example $O_i=P_i$, indeed,  using the commutators $[X_k,P_k]= i\hbar$ and  $[X_k,P_k^2]=2i\hbar P_k$, it simplifies to:
			\begin{equation*}
				2i\hbar\left( P_k(\psi)+ \underset{j:j\neq k}{\sum}P_j(\psi) -  \sumi P_i(\psi)\right)=0.
			\end{equation*}
			
			I find this interesting since the non-trivial cancellation \eqref{eq:Magicanc} regard a commutator between  $X_k$ (which does not commute with $P_k$ and its powers) and a non-trivial operator $G(P_1, \dots, P_k,\dots,P_N)$. At the same time I have an expression for \eqref{eq:genwfe} that does not move the COM and  forbids  a macroscopic object to become a momentum cat,  potentially avoiding a priori a spatial cat . To have an idea one can repeat the computations for \eqref{eq:MQP} and \eqref{eq:spcat} in the momentum representation with $P_i$ in place of $X_i$ in \eqref{eq:sD}. I will discuss this more at a later time.  Now I have still to look at 
			\begin{equation*}
				\frac{d^2}{dt^2}X_k(\psi) = \frac{d}{dt} \frac{i\hbar}{m} \bra \pa_k \psi \vert\psi \ket,
			\end{equation*}
			therefore  in \eqref{eq:hamevo} I consider $F(\psi)=i  \hbar\bra \pa_k \psi \vert\psi\ket $. This time one finds 
			{ \begin{eqnarray}\label{eq:cancellation3}
					&\bra \partial_\psi F \partial_{\psi^*}E - \partial_{\psi}E\partial_{\psi^*} F\ket =	\nonumber\\
					&\bra \psi \dverti{ (-i\hbar\pa_k) \left[\left(\sumi O_i   \right)^2 - 2 \left(\sumi O_i(\psi) \right)\sumi O_i \right]}\psi\ket - \bra \psi \dverti{\left[\left(\sumi O_i   \right)^2 - 2 \left(\sumi O_i(\psi) \right)\sumi O_i \right] (-i\hbar\pa_k) }\psi\ket \nonumber\\
					& = A-B\nonumber.
			\end{eqnarray}}
			I have the same of \eqref{eq:Magicanc} with $(-i\hbar \pa_k)$ in place of $X_k$, giving as condition \begin{equation}\label{eq:Magicanc2}
				\bra \psi \dverti{(-i\hbar \pa_k) G(O_1,\dots,O_N) - G(O_1,\dots,O_N) (-i\hbar\pa_k)}\psi \ket=0.
			\end{equation}
			This condition is verified easily for $O_i=P_i$, but also  for $O_i=X_i$ with similar computations  to the previous ones for $P_i$ in \eqref{eq:Magicanc}.

		\end{itemize}

		I found two cases,  one ($O_i=X_i$) directly restricting the wavefunction in space and another one ($O_i=P_i$) restricting   the wavefunction in momentum, in the sense of forbidding  to acquire two opposite macroscopic momentum. Which in turn  one can expect to forbid a spatial cat a priori.  The dimensions of \( w \) are \( J/m^2 \) and \( [kg]^{-1} \) for \( X_i \) and \( P_i \), respectively.

		\vspace{0.5cm}
		Symmetry by translation and rotation is respected by these cases, therefore usual conservation laws are expected.

		\section{The $O_i=J_i=L_i+s_i$ case}\label{sec:Jcase}
		
		I ended section \ref{sec:intro} with the question of whether the case $ O_i=L_i+s_i $ could be an option for satisfying a),b) and c). 
		Therefore, I now turn  to answer this question. 
		In \eqref{eq:cancellation} I have 
		\begin{eqnarray*}\label{key}
			G(O_1,\dots, O_N)= (L_k+s_k)^2 + 2 \underset{j:j\neq k}{\sum}(L_k+s_k)(L_j+s_j) - 2 \left(\sumi (L_i+s_i)(\psi)\right)(L_k+s_k)= I + II +III,
		\end{eqnarray*}
		where
		\begin{eqnarray}\label{eq:I-II-III}
			I =& L^2_k + 2 \underset{j:j\neq k}{\sum}L_kL_j - 2 \left(\sumi (L_i(\psi))\right)L_k,\\
			II =&  s^2_k + 2 \underset{j:j\neq k}{\sum}s_k s_j - 2 \left(\sumi (s_i(\psi))\right)s_k,\\
			III =&  2L_k s_k + 2 \underset{j:j\neq k}{\sum}(L_k s_j+ s_k L_j) - 2 \left(\sumi (s_i(\psi))\right)L_k -  2 \left(\sumi (L_i(\psi))\right)s_k.
		\end{eqnarray}
		We check if these terms will add or not extra terms in \eqref{eq:cinematic} and \eqref{eq:COMotion}. 
		
		\vspace{0.6cm}
		
		To evaluate the angular momentum $ O_i=L_i $, we  consider the two-dimensional case $ (X_k,Y_k) $ where $ L_k:= L_{z,k}= X_k P_{y,k}- Y_k P_{x,k}$. Moreover, the wavefunction will be $ \psi(x_1,\dots, x_N)\chi(s_1,\dots, s_N) $ for short $ \psi(x)\chi(s) $, where $ \chi $ is on a finite state space.
		We need to introduce some commutators: $ [X,L_z]= -i\hbar Y $ and $[X, L^2_z] = -i \hbar (YL_z +L_z Y) $. Using  these commutators for  term $ I $, for the $X-$component, the term \eqref{eq:cancellation} is:
		\begin{equation*}
			\mathcal{F}_k(\psi):=-i\hbar \bra \psi \dverti{ Y_k L_k + L_k Y_k}\psi \ket - 2i \hbar\underset{j:j\neq k}{\sum}\bra \psi \dverti{L_j Y_k}\psi \ket + 2i\hbar \mathcal{L}(\psi) \bra \psi \dverti{Y_k} \psi \ket,
		\end{equation*}
		giving in $ \frac{d \left\langle\psi | X_k| \psi\right\rangle}{dt} $ the extra term:
		\begin{equation*}\label{eq:nonzero}
			m\frac{d X_k(\psi)}{dt} = P_{x,k}(\psi) +wm\frac{\mathcal{F}_k(\psi)}{i\hbar},
		\end{equation*}
		and consequently for the COM ($ M=mN $)
		\begin{equation}\label{extraNewton}
			M\dot{X}(\psi)= P_{x}(\psi)+ wm \sumi\frac{\mathcal{F}_i(\psi)}{i\hbar}.
		\end{equation}
		For a general $\psi$ there is no reason to expect that the last term is zero, therefore I conclude that COM motion dynamics is violated. This can be verified with a simple example. Consider a single particle ($N=1$) in a 2D Gaussian state centered at the origin with average momentum $p_0$ along $x$:
		\begin{equation*}
			\psi(x,y) \propto e^{-\frac{x^2+y^2}{2\sigma^2}} e^{i \frac{p_0}{\hbar} x}.
		\end{equation*}
		
		We have $\langle L_z \rangle = 0$ and $\langle Y \rangle = 0$, so the last term of $\mathcal{F}_k$ vanishes. However, evaluating the anti-commutator $\{Y, L_z\} = 2XYP_y - 2Y^2P_x - i\hbar X$ yields a non-zero expectation given  by the term $-2\langle Y^2 \rangle \langle P_x \rangle = -\sigma^2 p_0$.
		Consequently, $\mathcal{F}(\psi) = -i\hbar(-\sigma^2 p_0) = i\hbar \sigma^2 p_0 \neq 0$. This implies that the velocity of the center of mass would anomalously depend on the spatial width $\sigma$ of the wavepacket, violating standard kinematics. Even if we have already seen that $ I $ should be rejected, we complete the check for $ II $ and $ III $, surprisingly the latter  cancels out. $ II $ cancels because  it involves commutations between space operators and spin operators. While for III, equation \eqref{eq:cancellation} becomes 
		\begin{equation*}
			-2i\hbar \bra \psi,\chi\dverti{\sumi s_i Y_k - \left(\sumi s_i(\chi)\right) Y_k	}\psi,\chi\ket=0.
		\end{equation*}
		
		An additional argument  considers the   term \eqref{eq:genwfe} under translation by a vector $a$.  With $O_i=X_i $ and $O_i=P_i$, it has the form of variance respectively for the position of COM and its momentum, which is invariant  under translation, while considering $O_i =L_i = X_i \times P_i $, and $L=\sum_i( X_i \times P_i)$ the total angular momentum, we have
		\begin{eqnarray*}
			D(\psi):=  \bra\psi \dverti{  \left[ \frac{1}{N}\sum_i((X_i + a) \times P_i) \right]^2  } \, \psi \ket - \left[\bra\psi \dverti{ \frac{1}{N}\sum_i ((X_i + a) \times P_i  ) } \, \psi \ket \right] ^2= \\
			\frac{1}{N^2}\left\{\left\{[L]^2(\psi)  - \left( L(\psi) \right)^2\right\} +   \left\{[L\cdot(a \times P)](\psi)  + [(a \times P)\cdot L](\psi) -  2L(\psi )(a \times P)(\psi)   \right\} + \left\{ [a \times P]^2 (\psi) - (a \times P(\psi))^2 \right\} \right\}.
		\end{eqnarray*} 
		In general  the second and third term are not zero.  The lack of invariance by translation of the term $wN^2 D(\psi)$ with $O_i = X_i \times P_i$ confirms our conclusion.
		
		\vspace{0.5cm}
		
		Having concluded that the case $O_i = L_i + s_i$ does not satisfy the constraints a), b), and c), and given that the acceptable cases ($O_i = P_i$ and $O_i = X_i$) of WFE are proposed  to eliminate spatial dispersion through nonlinear modifications to Schrödinger’s equation and not for sums of discrete “spins” as  examined in \cite{DeCarlo-Wick}, an interpretation of \eqref{eq:mag} is  that  WFE should apply to a magnetometer, not  to the system of spins or magnetic atoms. . Another possibility might be considering a model of magnet with both spatial and spin coordinates, discretizing  wavefunction states on a lattice with space `$a$' in presence both spin interactions and WFE,  to explore the indirect role of a spatial WFE on the spin coordinates. But this is  a future  research program.

		\section{Ideal test: a toy model and experimental estimates}\label{sec:HWMeasurment}

		In section \ref{sec:intro}, we mentioned that it should also be demonstrated that, along with $\alpha$ and $\beta$ in the initial state \eqref{eq:singlespin}, the apparatus system will display chaos, either moving to the right or left.  Here,  the idea is illustrated in the next subsection \label{app:toy} with a discrete  system (introduced in \cite{MP3}), where a spin variable undergoes diffusion. The experiment closest to the toy model we present is \cite{Abdietl}. Here they propose a a mechanical resonator consisting “lithium-decorated monolayer graphene sheet” of diameter one micrometer suspended in a “controllable, electrostatic double-well potential”. The metallic lithium render the wafer electrically conductive. The system is driven dissipatively toward its ground state via optomechanical sideband cooling. Observation is by way of magnetic coupling to a “superconducting qubit”.
		
		From the website \cite{Q-Tube}, it seems that the ERC Q-Tube aims to create cat states using double-well potentials.

		\subsection{A  spin toy model and chaos (property d))}\label{app:toy}

		The    microscopic system is represented by one qubit with spin $ J=1/2$, entangled with a system of $N-1$ spins forming the ``apparatus'', with readout given by the total spin $S=\overset{N}{\underset{i=2}{\sum}} s_i$. The state of the system will be 
		\begin{equation*}
			\psi = \underset{s:=\{s_1,\dots,s_N\}}{\sum}\psi(s_1,s_2,\dots, s_N)\verti s_1,s_2,\dots, s_N\rangle.
		\end{equation*}
		The linear part of the dynamics is:
		\begin{eqnarray}
			&H_{QM}   = -\frac{1}{m} \Delta  + V(S), \label{eq:toymod} \text{ where}\\
			\Delta \psi & = \sum_{i=1}^{N}\left\{\psi(s_1,\dots,s_{i}+1,\dots,s_N)+\psi(s_1,\dots,s_{i}-1,\dots,s_N)  - 2 \psi (s_1,\dots,s_i,\dots,s_N)\right\}
		\end{eqnarray}
		with reflecting boundary conditions, i.e. $\psi(s_1,\dots, s_i +1, \dots,s_N)= \psi(s_1, \dots, s_i,\dots, s_N)$ if $s_i +1 $ is greater than $1/2$ and  $\psi(s_1,\dots, s_i -1, \dots,s_N)= \psi(s_1, \dots, s_i,\dots, s_N)$ if $s_i -1 $ is smaller than $ - 1/2$. The external potential in fig. \ref{fig:Dwell}, is a double well
		\begin{eqnarray*}
			V(x)= {const}(x^2-R^2)^2,\\
			R= J(N-1),\,\,	const =  \Delta V/R^4.
		\end{eqnarray*}
		And additive constant can be added to $V(S)$. The system close to the ground state can form a cat, i.e. an object where the system is at the same time in the well on the right and in the one on the left. This means states like
		\begin{equation}\label{eq:spin cat}
			\psi = \frac{1}{\sqrt{2}} \left| S = R\ket  +\frac{1}{\sqrt{2}} \left| S=-R \ket,
		\end{equation}
		while \eqref{eq:genwfe} is written with $O_i =s_i$. Although the model is written for spins, the kinetic term and the nonlinearity should be intended as spatial, that is one spreads the spins into the wells and the  other bounds the separation between the two wells. Of course this is just an illustrative toy model, but it allows simulations since it is  finite dimensional.
		Finally, the spin $s_1$ is coupled with the apparatus with a term 
		\begin{equation}
			\alpha s_1 S.
		\end{equation}
		Initially, the system has to be confined (in dispersion)  to a narrow band centered at the local of the central unstable-equilibrium point ("hill") in the potential. So for the initial state one considers
		\begin{equation}\label{eq:initialst}
			\psi = Z_N \left[\alpha \left| 1/2 \ket + \beta \left|-1/2\ket\right] \times \left[  \underset{s_i=\pm 1/2: \dverti{S}\leq center }{\sum}  \left|s_2\ket \left |s_3\ket \dots \left|s_N\ket   \right],\,\, \alpha^2 +\beta^2=1,
		\end{equation}
		where $Z_N$ is a normalization constant and  $center$ is  a parameter to center the system. 
		The idea is that if the initial state has a small  asymmetry in the spin to ``measure'', that is $\alpha = \gamma +\epsilon$ and $\beta= \gamma - \epsilon$ ($<s_1>= 2\epsilon \gamma$), then, if the non-linearity is  strong enough,  the system will be pushed into one of the two wells, giving a definite result, i.e.,  ``spin-up'' or ``spin-down''. Thus, the randomness results from the  macroscopic  amplification of a small instability. (On a long time scale, to have the system setting to rest, we should introduce  friction.)   If $w=0$, the model will disperse. I formalize this concept by applying the Theorem at page 10 of \cite{MP3}, where the model was simulated.  First, I need to  write the symplectic system of ODEs along with its linearized system on the tangent plane.  The Hamiltonian takes the form:
		$
		E = \langle \psi | H_{QM} | \psi \rangle + w \left\{ \langle \psi | S^2 | \psi \rangle - \langle \psi | S | \psi \rangle^2 \right\}.$
		Next, the wavefunction is decomposed into real and imaginary parts as: $\psi_k = Q_k + i P_k$, for the initial state \eqref{eq:initialst} $P_k=0$. We  then have the Hamiltonian system
		\begin{equation}
			\begin{aligned}
				\frac{\partial Q_k}{\partial t} &=\left( \frac{1}{2}\right) \phantom{-} \frac{\partial}{\partial P_k} E(Q,P) \\
				\frac{\partial P_k}{\partial t} &= - \left( \frac{1}{2}\right) \frac{\partial}{\partial Q_k} E(Q,P)
			\end{aligned},
			\qquad 
			\begin{aligned}
				E(Q,P) = (1/2) \sum_{j,k} P_j H_{QM,j,k} P_k &+ (1/2) \sum_{j,k} Q_j H_{QM,j,k} Q_k 
				+ w \left\{ \sum_k (P_k^2 + Q_k^2) S_k^2 - \left[ \sum_k (P_k^2 + Q_k^2) S_k \right]^2 \right\},
			\end{aligned}
		\end{equation}
		where '$k$' here indexes a configuration and $H_{QM} $ a real, symmetrical matrix. Let $\#$ be the total number of spin configurations. The associated linearized  dynamical system,  which approximates the motion of the  original for a small time interval, is  defined by:
		\begin{equation}\label{eq:linsys}
			\begin{aligned}
				\frac{\partial \xi_k}{\partial t} &= \left( \frac{1}{2}\right) \phantom{-} \sum_{j} \left\{ \frac{\partial^2 E}{\partial P_k \partial Q_j} \xi_j + \frac{\partial^2 E}{\partial P_k \partial P_j} \eta_j \right\} \\
				\frac{\partial \eta_k}{\partial t} &= -\left( \frac{1}{2}\right) \sum_{j} \left\{ \frac{\partial^2 E}{\partial Q_k \partial Q_j} \xi_j + \frac{\partial^2 E}{\partial Q_k \partial P_j} \eta_j \right\}
			\end{aligned}
		\end{equation}
		Here $\xi$ and $\eta$ are real $\#$-vectors, in matrix form \eqref{eq:linsys} can be written:
		\begin{equation}
			\frac{d}{dt} \begin{pmatrix} \xi \\ \eta \end{pmatrix} = M(t) \begin{pmatrix} \xi \\ \eta \end{pmatrix}
			\qquad \text{with} \quad
			M = \begin{pmatrix} A & B \\ C & D \end{pmatrix}
			\quad \text{where} \quad
			\begin{aligned}
				A &= -u \otimes v \\
				D &= -A^T \\
				B &= \Lambda - u \otimes u \\
				C &= -\Lambda + v \otimes v
			\end{aligned}.
		\end{equation}
		Here $u$ and $v$ are $\#$-vectors and $\Lambda$  a matrix $\#\times \#$ . Specializing to the model \eqref{eq:toymod}:
		\begin{equation}\label{eq:matrix}
			H = H_{QM} + \text{diag}(f), \quad f_i = w S_i \left\{ S_i - 2 \sum_{k} S_k (Q_k^2 + P_k^2) \right\}, \quad
			\begin{aligned}
				A_{i,j} &= - 4 w S_i S_j P_i Q_j \\
				B_{i,j} &= H_{QM,i,j} + f_i \delta_{i,j} - 4 w S_i S_j P_i P_j \\
				C_{i,j} &= - H_{QM,i,j} - f_i \delta_{i,j} + 4 w S_i S_j Q_i Q_j \\
				D_{i,j} &= - A_{j,i}
			\end{aligned}\,\,,
		\end{equation}
		which implies
		\begin{equation}
			\Lambda = H_{QM} + \text{diag}(f), \quad u_i = 2\sqrt{w} S_i P_i, \quad v_i = 2\sqrt{w} S_i Q_i.
		\end{equation}
		Expanding and contracting directions, namely directions $(\xi,\eta)$ on the tangent plan where a small difference in the initial condition is amplified in time, appear if the matrix $M$ has positive and negative real eigenvalues. This happens if the following theorem, introduced in \cite{MP3}, is verified:
		\begin{theorem}
			Let $M$ be an even-dimensional ($2\# \times 2\#$) matrix and
			\begin{equation}\label{eq:DCI}
				\det M < 0,
			\end{equation}
			then $M$ has both positive and negative eigenvalues.
		\end{theorem}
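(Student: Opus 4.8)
\section*{Proof proposal}

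The plan is to read the sign of $\det M$ off the spectrum of $M$ and then exploit the even dimension. Since $M$ is a real $2\#\times 2\#$ matrix, its characteristic polynomial has real coefficients, so its non-real eigenvalues occur in complex-conjugate pairs $\lambda,\bar\lambda$. Because $\det M<0$ is in particular nonzero, $M$ is invertible and $0$ is not an eigenvalue; hence each such pair contributes a strictly positive factor $\lambda\bar\lambda=|\lambda|^2>0$ to the product of all eigenvalues. As $\det M$ equals that product (eigenvalues counted with algebraic multiplicity over $\mathbb{C}$), the conjugate pairs drop out of the sign and one obtains $\operatorname{sign}(\det M)=\operatorname{sign}\!\left(\prod_{\lambda\in\mathbb{R}}\lambda\right)$, the product running over the real eigenvalues.

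First I would record the parity fact that actually decides the theorem: the non-real eigenvalues are even in number (they pair up), and the full spectrum has even cardinality $2\#$, so the number $r$ of real eigenvalues is even. Then, from $\det M<0$ together with the sign identity above, the product of the $r$ real eigenvalues is negative; since none of them is zero, the number of negative ones among them must be odd. With $r$ even, an odd count of negative real eigenvalues leaves $r-(\text{odd})$, again odd and hence at least one, positive real eigenvalues. Thus $M$ has at least one eigenvalue of each sign, which is the claim (and in particular an expanding and a contracting direction in the sense used for \eqref{eq:matrix}).

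The only subtle point — and the step I would treat as the crux — is not the existence of a negative eigenvalue (that is immediate from a negative determinant) but the existence of a positive one. This is precisely where even-dimensionality is indispensable: the $1\times 1$ example $M=(-1)$ has $\det M<0$ yet no positive eigenvalue, so any correct argument must use that $2\#$ is even to force $r$ even and thereby manufacture the positive eigenvalue. For this reason I would spell out the parity bookkeeping explicitly rather than gloss it, since that counting is the entire content of the statement; the reduction to real eigenvalues via conjugate pairs is routine and the negativity of $\det M$ does the rest.
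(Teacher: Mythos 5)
Your proof is correct, but it takes a genuinely different route from the paper's. The paper argues analytically on the characteristic polynomial: $p(\lambda)=\det(M-\lambda I)$ has leading term $\lambda^{2\#}$ (here even dimension enters, making the leading coefficient $(-1)^{2\#}=+1$), so $p(\lambda)\to+\infty$ as $\lambda\to\pm\infty$, while $p(0)=\det M<0$; the intermediate value theorem then forces a root on each side of the origin. You instead argue algebraically on the spectrum: conjugate pairs of non-real eigenvalues contribute positive factors to $\det M$, so the product of the real eigenvalues is negative, hence an odd number of them are negative; since the real eigenvalues are even in number (even total dimension minus an even number of non-real ones), an odd and therefore nonzero number must be positive. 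Both arguments are sound and both hinge on the same parity of the dimension — you make this dependence explicit with the $1\times 1$ counterexample $M=(-1)$, which is a useful sanity check the paper omits. Your version yields slightly more information (the counts of positive and of negative real eigenvalues are each odd), at the cost of being a bit longer; the paper's IVT argument is the more economical path to the bare existence statement. One small point worth stating explicitly in your write-up: the conjugate-pairing step uses that $M$ is a real matrix, which holds here since $M$ is built from the real blocks $A,B,C,D$ of the linearized Hamiltonian system, and the paper's proof relies on the same fact implicitly (real coefficients of $p$).
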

		The proof is in Appendix \ref{app:chaos_cond}. Now I apply this result to show that exist a $w^*>0$ such that ``Determinant Criterion of Instability''(DCI) \eqref{eq:DCI} $\det M <0$ is satisfied. It is important to observe that  because the determinant condition takes the form of an inequality, every such
		wavefunction will exist in a neighborhood also satisfying the condition. Therefore,
		a system started at such a point will enjoy that condition at least for a short
		time.
		
		\begin{lemma}
			For the case of the matrix  \eqref{eq:matrix} defined above,  the condition \eqref{eq:DCI}  is equivalent to 
			\begin{equation}\label{eq:DCI2}
				v^t \Lambda^{-1} v > 1.
			\end{equation}
		\end{lemma}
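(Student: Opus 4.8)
The plan is to evaluate $\det M$ directly, exploiting that the four blocks are either rank one ($A=-u\otimes v$ and $D=-A$) or rank-one perturbations of $\pm\Lambda$ ($B=\Lambda-u\otimes u$, $C=-\Lambda+v\otimes v$). The guiding observation is that after simplification a factor $\det(\Lambda)^2>0$ will factor out, so that the sign of $\det M$ — and hence the inequality \eqref{eq:DCI} — is decided by a single scalar built from $\Lambda^{-1}$ and $v$. I would therefore aim to reduce $\det M$ to the form $\det(\Lambda)^2\,\Phi$ and read off the criterion from the sign of $\Phi$.

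The cleanest route uses the initial datum. At the starting configuration \eqref{eq:initialst} one has $P_k=0$, and since $u_i=2\sqrt{w}\,S_i P_i$ this forces $u=0$. Then $A=D=0$, $B=\Lambda$, $C=-\Lambda+v\otimes v$, so that $M$ is block anti-diagonal,
\begin{equation*}
M=\begin{pmatrix} 0 & \Lambda \\ -\Lambda+v\otimes v & 0\end{pmatrix}.
\end{equation*}
First I would apply the block anti-diagonal identity $\det M=(-1)^{\#}\det(\Lambda)\det(-\Lambda+v\otimes v)$. Next, writing $-\Lambda+v\otimes v=-(\Lambda-v\otimes v)$ and invoking the matrix determinant lemma $\det(\Lambda-v\otimes v)=\det(\Lambda)\,(1-v^t\Lambda^{-1}v)$, I obtain $\det(-\Lambda+v\otimes v)=(-1)^{\#}\det(\Lambda)\,(1-v^t\Lambda^{-1}v)$. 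Multiplying the two factors, the signs $(-1)^{\#}$ square away and
\begin{equation*}
\det M=\det(\Lambda)^2\,\bigl(1-v^t\Lambda^{-1}v\bigr).
\end{equation*}
Since $\Lambda=H_{QM}+\mathrm{diag}(f)$ is symmetric and (generically) invertible, $\det(\Lambda)^2>0$, whence $\det M<0$ holds precisely when $1-v^t\Lambda^{-1}v<0$, that is \eqref{eq:DCI2}. This establishes the equivalence.

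The step I expect to require the most care is legitimizing the restriction to $u=0$ and the invertibility of $\Lambda$. For a general configuration ($u\neq 0$) I would factor the rank-$\le 3$ update $M-M_0$, with $M_0=\bigl(\begin{smallmatrix}0&\Lambda\\-\Lambda&0\end{smallmatrix}\bigr)$, as $L\,\Sigma\,R^t$ and apply the generalized (Sylvester) determinant lemma $\det(M_0+L\Sigma R^t)=\det(M_0)\det(I+\Sigma R^t M_0^{-1}L)$, using $\det(M_0)=\det(\Lambda)^2$ and $M_0^{-1}=\bigl(\begin{smallmatrix}0&-\Lambda^{-1}\\\Lambda^{-1}&0\end{smallmatrix}\bigr)$. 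A short $3\times 3$ determinant then yields $\det M=\det(\Lambda)^2\bigl[(1-v^t\Lambda^{-1}v)(1-u^t\Lambda^{-1}u)-(u^t\Lambda^{-1}v)^2\bigr]$, which collapses to the displayed criterion exactly when $u=0$, confirming that \eqref{eq:DCI2} is the correct condition at the initial state where the DCI is tested. Invertibility of $\Lambda$ is the only genuine hypothesis needed; if $\Lambda$ is singular one replaces $\Lambda^{-1}$ by a limiting or pseudo-inverse argument, which I would relegate to a remark.
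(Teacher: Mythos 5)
Your proposal is correct and follows essentially the same route as the paper: restrict to the initial state with $P=0$ so that $u=0$ and $M$ becomes block anti-diagonal, reduce $\det M$ to $(-1)^{\#}\det(\Lambda)\det(-\Lambda+v\otimes v)$, and apply the rank-one determinant (Weinstein--Aronszajn / matrix determinant) lemma to obtain $\det M=(\det\Lambda)^2\bigl(1-v^t\Lambda^{-1}v\bigr)$. The only differences are cosmetic (the paper phrases the block reduction via a Schur complement rather than the anti-diagonal identity), and your closing remark on the general $u\neq 0$ case is an extra not present in the paper.
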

		Again, we refer the proof to Appendix \ref{app:chaos_cond}. Considering the apparatus and plugging-in the flat superposition \eqref{eq:initialst} into the expression 
		\begin{equation}\label{eq:x}
			r= 4w Q^T \text{diag}(S)\Lambda^{-1} \text{diag}(S)Q = 4w Q^T \text{diag}(S)\frac{1}{H_{QM} + \text{diag}(f)} \text{diag}(S)Q >1.
		\end{equation}
		If $w=0$, then $ r=0$, i.e., for the linear theory there are not instabilities.  $H_{QM}$ is a real symmetric matrix, define it such that $H_{QM}\geq 0$ or such that $\Lambda$ is invertible in the neighborhood of interest, choosing odd the number of spins composing the apparatus  and $\langle S\rangle <1/4$ to avoid invertibility troubles, for the limit $w\rightarrow\infty$ it is
		\begin{equation}\label{eq:wlim}
			\underset{w \rightarrow \infty}{\lim} r= 4\sum_k Q_k^2 \left(\frac{S_k}{S_k- 2\langle S \rangle }\right)   >1.
		\end{equation} 
		Other choices are possible, e.g., $H_{QM} >0$, any parity of the number of spins and  $\underset{k:S_k\neq 0}{\sum} Q_k^2 > 1/4 $.
		
		\begin{corollary}
			There exists $0<w^*< +\infty$ such that for $w>w^*$, the DCI condition is satisfied, defined by $w^*=\sup \left\{w: r(w)=1\right\}$.	
		\end{corollary}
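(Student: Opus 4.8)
The plan is to reduce the determinant condition to the scalar inequality supplied by the Lemma and then run an intermediate-value argument in the single parameter $w$. First I would record that, by the Lemma, $\det M < 0$ is equivalent to $v^t \Lambda^{-1} v > 1$, and that with $v_i = 2\sqrt{w}\,S_i Q_i$ this quadratic form is exactly the quantity
\begin{equation*}
r(w) = 4w\, Q^T \operatorname{diag}(S)\,\Lambda^{-1}\,\operatorname{diag}(S)\,Q, \qquad \Lambda = H_{QM} + \operatorname{diag}(f),\ f_i = w\,S_i\bigl(S_i - 2\langle S\rangle\bigr),
\end{equation*}
evaluated on the fixed flat superposition (so that $Q$ and $\langle S\rangle$ are constants and only $w$ varies, $\Lambda$ depending on $w$ through $f$). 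Thus the DCI condition for the toy model is precisely $r(w) > 1$, and I only have to study $r$ as a function of $w \ge 0$.

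Next I would establish the two endpoint facts already displayed above the statement. At $w = 0$ the explicit prefactor kills everything, so $r(0) = 0 < 1$. For the other end, since $f$ is linear in $w$, as $w\to\infty$ the term $w\operatorname{diag}\bigl(S_i(S_i-2\langle S\rangle)\bigr)$ dominates $H_{QM}$ inside $\Lambda$, and expanding $\Lambda^{-1}$ to leading order in $1/w$ cancels the explicit factor $w$, giving $\lim_{w\to\infty} r(w) = 4\sum_k Q_k^2\, S_k/(S_k - 2\langle S\rangle)$. Under the stated hypotheses ($N$ odd and $\langle S\rangle < 1/4$, or the alternative with $H_{QM} > 0$) this limit is strictly larger than $1$ while $\Lambda$ stays invertible, so $r$ is continuous on $[0,\infty)$.

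Finally I would invoke continuity. Because $r(0) < 1$ while $r(w)$ tends to a value $> 1$, the level set $\{w : r(w) = 1\}$ is nonempty by the intermediate value theorem and bounded above (for $w$ past some $W$ one has $r(w) > 1$), so $w^* := \sup\{w : r(w) = 1\}$ is a finite positive number; this is the boundary $\det M = 0$ of the DCI region. For any $w > w^*$ there is no further crossing, hence by continuity $r - 1$ keeps a constant sign on $(w^*, \infty)$, and since $r(w)$ approaches a value $> 1$ this sign is positive: $r(w) > 1$, i.e.\ $\det M < 0$, for all $w > w^*$, which is the claim.

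The main obstacle I anticipate is controlling the matrix inverse uniformly in $w$: one must guarantee that $\Lambda(w) = H_{QM} + \operatorname{diag}(f)$ never becomes singular along the half-line (otherwise $r$ develops poles and continuity fails), and that the $1/w$ expansion of $\Lambda^{-1}$ is legitimate despite the components with $S_k = 0$, where $f_i$ vanishes and the naive leading term is ill-defined. This is precisely what the invertibility hypotheses are there to secure, and the delicate step is to show they force $\Lambda(w)$ to remain invertible throughout and the limiting sum to converge to a value exceeding $1$.
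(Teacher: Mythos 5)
Your proposal is correct and follows essentially the same route as the paper: reduce the DCI condition to $r(w)>1$ via the Lemma, note $r(0)=0$ and $\lim_{w\to\infty}r(w)=4\sum_k Q_k^2\,S_k/(S_k-2\langle S\rangle)>1$, and conclude by continuity of $r$ in $w$. You are in fact somewhat more careful than the paper about why $r-1$ stays positive for all $w>w^*$ (constant sign past the last crossing) and about the invertibility of $\Lambda(w)$ along the whole half-line, which the paper only addresses by hypothesis.
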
 
		This follows from the fact  $r(w)$ is a continuous function, then there exists $w^*>0$ such that $r(w)>1$ for $w>w^*$.
		For the choice of an initial centered state, or perturbations in its neighborhood, $\langle S \rangle \approx 0 $; therefore $\underset{w \rightarrow \infty}{\lim} r \approx 4\sum_k Q_k^2= 4>1$ and the DCI condition is satisfied.
		
		It would be interesting to know if the strength of $w$ decreases with an increasing  number of spins. In \cite{MP3}, there is  some numerical evidence of this, albeit limited  because of  computational resources. Given the increase in dimensionality in wavefunction models, it would be difficult to scale the number of spins in simulations (an application of parallel computing could be of help).

		Even if  a toy model, it is an high-dimensional nonseparable nonlinear Hamiltonian model, so the best way to study it is via numerical methods. Considering various randomizations of the initial conditions, the simulations of \cite{MP3} support the picture presented here, with the ``apparatus'' moving right or left when the nonlinearity becomes strong enough to satisfy DCI condition. In the linear case $w=0$, the simulations of  the system  evolved into a cat state. In \cite{ContinuumChaos},  it was   shown numerically that the Maximal Lyapunov exponent is positive in these type of models, which is considered the signature of deterministic chaos.
		
		The picture rises the fascinating hypothesis of connecting the threshold where cats disappear with that of deterministic chaos \cite{ContinuumChaos}; here, we are generalizing the DCI condition to continuum models. For the  \eqref{eq:toymod} and different randomizations of the initial condition, a comparison with the Born's rule was made in \cite{MP3}. In this regard, in \cite{DotonScreen}, a statistical comparison with Born's rule was made for a particle-detector model.
		
		The computational techniques of \cite{MP3} are described therein and  are developed from \cite{NumRec} and \cite{Tao}.   
		
		\begin{figure}[htbp]
			\centering
			\includegraphics[width=0.8\textwidth]{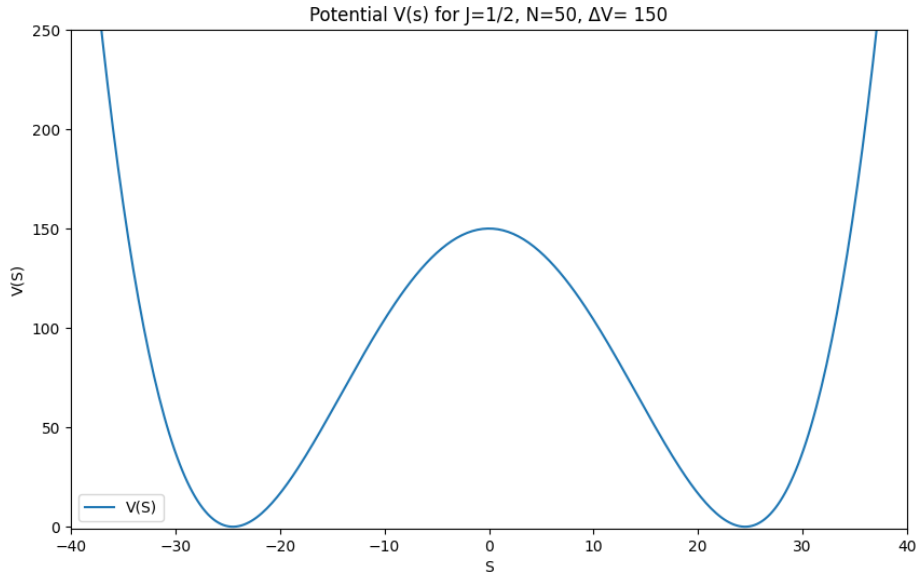}
			\caption{External potential vs apparatus spin}
			\label{fig:Dwell}
		\end{figure}
		
		\subsection{The energy barrier}

		The idea behind this ideal experiment  is that WFE is an energy barrier to form cat states and to persist energy has to be supplied by an external potential. Therefore I presented the double-well toy model. Consider a drop of $ \Delta V $ in the potential energy between the origin and the minima.  The case $O_i=X_i$ is more intuitive. In this case one expects  a critical $N_c$, where cats stop to form, such that 
		\begin{equation}\label{eq:critest}
			\Delta V=N_c \Delta v \approx w N_c^2 R^2\,\,i.e.\,\, w= \frac{\Delta V}{N^2_c R^2}= \frac{\Delta v}{N_c R^2}
		\end{equation}
		We see  that boundary is not only a question of size but also of energy, namely being able to  supply some extra energy $ \Delta V'= \Delta V +\delta V>  \Delta V$, there will be some extra room for cats. This was observed in the simulations of \cite{MP4}, where $w$ is artificially tuned large for small $N$ in the toy model we described. 
		Of course scalability is an important issue for the realization of the experiment.
		
		I attempt to plug some reasonable orders of magnitude into \eqref{eq:critest}. Considering the temperature of $T=1$ nK, I estimate the height of the barrier using the suggestion of Ashkin et al. \cite{Ashkin1986} for optical traps, that is $\Delta v \sim 10-10^2 k_BT$, where $k_B$ is the Boltzmann constant. With $N \sim 10^9$ (typically a radius of $100$ nm) and $R\sim 10^{-7}$ m, I get $w \sim 10^{-25}-10^{-26}$ SI. Of course this is just an example; without a realization of the experiment it is quite difficult to set parameters. The experiment closest to the toy model I presented is \cite{Abdietl}. In their proposal, the mechanical resonator is a monolayer graphene sheet with mass $M \approx 5.7 \times 10^{-19}$ kg ($5.7 \times 10^{-16}$ g), which corresponds to roughly $N \approx 3 \times 10^7$ carbon atoms, moving in a symmetric
		double-well potential. The engineered electrostatic potential is $V(x) = -\frac{\nu}{2}x^2 + \frac{\beta}{4}x^4$, where the parameters $\nu \approx 1.3 \times 10^{-5}$ J/m$^2$ and $\beta \approx 3.7 \times 10^{15}$ J/m$^4$ create a barrier height of $\Delta V = \nu^2/4\beta \approx 1.1 \times 10^{-26}$ J and a well separation of $R \approx 2\sqrt{\nu/\beta} \approx 1.2 \times 10^{-10}$ m. Plugging these values into \eqref{eq:critest}, we obtain a rough estimate of $w \sim 10^{-21}$ J/m$^2$. Maybe we could be lucky with the project ``Q-Tube'' granted in 2025 \cite{Q-Tube}, where they seem to aim to try something as suggested here; their \emph{`objective is to achieve quantum delocalization that exceeds the size of the nanotube using the double-well potential approach''}.

		For the other admissible case found in Section \ref{sec:derivationcom}, it is possible to make specular considerations, reasoning that a spatial cat to form,  it has first to ``diffuse''. Consider the following initial wavepacket in the $p$ representation
		\begin{equation}\label{eq:singlespin}
			\alpha\psi_q(p -p_0)  \left| +\ket + \beta e^{i\gamma} \psi_q(p +p_0)\left| -\ket,
		\end{equation}
		representing a particle with spin in a superposition up-down and possessing momentum of opposite sign. Once  entangled with a system acting as an apparatus at $x $,  initially in the  state $\pN \phi_q(p_j)e^{ikx} $, which is meant  to move either to the right or left to produce an outcome, the term \eqref{eq:genwfe} with $ O_i=P_i $ prevents  the  system's evolution into a cat state $\frac{1}{\sqrt{2}}\psi_q(p -p'_0)    \pN \phi_q(p_j - P) \left| +\ket + \frac{1}{\sqrt{2}} e^{i\gamma'} \psi_q(p +p'_0)\pN \phi_q(p_j + P)\left| -\ket $.  With calculations similar to those in  Section \ref{sec:intro}   the initial state  has energy $E_{QM}(\psi)+E_{WFE}(\psi)$  of the order $N$, while the cat state scales as $ N^2 $.      In Section \ref{sec:comparison}  I will show that bounding the dispersion of the momentum reflects on the dispersion of the position. 
		
		In the case of P-WFE, the ``critical'' estimate \eqref{eq:critest} is less intuitive and would require a dynamical analysis, perhaps similar to the one performed for free expansion in Section \ref{sec:comparison}, but adding a double well  potential. Considering that the important parameters of the experiment are the spatial dimension $R$ and the barrier height $\Delta v$, a naive estimate is given by a momentum cat where $P = m R/T$, with $T$ being the time required for the system to diffuse into the wells, i.e.,
		\begin{equation}\label{eq:Pestimate}
			\Delta V \approx w N_c^2 P^2 = wm^2 N_c^2 \frac{R^2}{T^2} = w' N_c^2 R^2.
		\end{equation}
		Another way to see \eqref{eq:Pestimate} is considering $P=MR/T_N$ with  $T_N= N\tau $, i.e.  making a scaling assumption about the time to form a momentum cat.
		
		A more direct approach for testing P-WFE experimentally is suggested by \cite{HigbieStamper}, who describe a procedure to generate momentum cat states using a double-well potential in $P$. In their proposal, a Bose-Einstein condensate is subjected to a pair of counter-propagating laser beams that induce Raman transitions between internal atomic states.  They engineer a dispersion relation that is a double-well potential in momentum space. The atoms are driven into a superposition of two distinct momentum states, effectively creating a `momentum cat' where the wavepacket is split not in position, but in velocity space. Their motivation is that creating initial momentum cats is feasible with larger atom numbers than spatial cats. This is because, instead of relying on unstable attractive condensates, repulsive condensates can be used, which facilitates scaling the number of atoms. Furthermore, while creating a spatial double well requires ``exceptional spatial control'' over trapping potentials with extremely small dimensions, the parameters for the momentum-space scheme (momentum well spacing, barrier height) can be controlled robustly by tuning laser frequencies and angles. The system is confined in a harmonic potential which plays the role of the ``kinetic energy'' term; in the momentum representation, the (rescaled) effective Hamiltonian is:
		\begin{equation}\label{eq:PHam}
			H_P = \left[P^2 - \frac{1}{2}\sqrt{4k^2 P^2 + \Omega^2}\right] - \left[\frac{1 }{2\tilde{m}}\frac{\partial^2}{\partial P^2}\right].
		\end{equation}
		Here $P$ denotes the one-dimensional momentum variable, and $\tilde{m} $ is the corresponding dimensionless effective mass parameter. In this setup, the shape of the potential is determined by the external laser. The parameter $k$ represents the momentum transfer imparted by the Raman lasers ($\hbar k = \hbar|\mathbf{k}_1 - \mathbf{k}_2|$), which sets the separation distance between the two momentum wells. The parameter $\Omega$ is the two-photon Rabi frequency, proportional to the laser intensity, which controls the height of the barrier; a smaller $\Omega$ results in a higher barrier and more distinct wells.
		This potential forms a double well when $\Omega < k^2/2$. In this regime, there are two minima located at $\sqrt{4k^2 P^2 + \Omega^2} = k^2$, separated by a barrier of height $\Delta v = \frac{1}{4} \left( k - \frac{\Omega}{k} \right)^2$.  Now the equivalent of \eqref{eq:Pestimate}    is 
		\begin{equation}
			\Delta V  = N \frac{1}{4} \left( k - \frac{\Omega}{k} \right)^2   \approx w N_c^2 P^2 .
		\end{equation}
		Unfortunately, \cite{HigbieStamper} only describes how the experiment should be performed. In \cite{Sadler2004}, the authors state they were preparing the experiment, but due to lack of access to the full paper, I could not determine if they provided values for these parameters.

		\section{Comparison with Collapse models}\label{sec:comparison}
		
		I devote this Section to a comparison between the idea presented here, that is  supplementing the Schr\"odinger's evolution with a term that forbids macroscopic superpositions by energy conservation, and the well known Collapse Models, where a stochastic dynamics supplements Schr\"odinger's evolution to model an objective collapse. For this, I will try to identify  what they have in common and what they do not,  from a general point of view, focusing later on the differences in mathematical structure used to achieve their goals, and finally on their most representative phenomenology.  Since their introduction with the GRW model \cite{GRW86}, objective collapse theory has evolved and various branches  have emerged\cite{Collapse}.  For the present discussion, we consider as reference  the Continuous Spontaneous Localization (CSL) model introduced by Ghirardi, Pearle and Rimini \cite{GPR90}, which is the one around which most of the current models are built.
		
		\vspace{0.3cm}
		
		While in a Schr\"odingerist philosophy I treat the wavefunction as an element of reality, Collapse models are more ambiguous on their meaning; they can be intended in that way but more typically they are considered as phenomenological descriptions of an underlying physics \cite{Collapse}.  The  models I presented here and Collapse models share the idea that to restore the behavior of matter moving as point particles under forces, it is necessary to block the spreading in space of the wavefunction state.  The modeling issue is to supply Schr\"odinger's equation with new terms, agreeing  that superpositions do not disappear because of the limit $\hbar \rightarrow 0$ (on this, I observed that in the derivation of Section \ref{sec:derivationcom} that $\hbar$ cancels out) and decoherence. In connection with the latter, I observe that interference terms already contribute negligibly to the dispersion of large objects.	To show it, I consider a  test function $\phi_r$ as in the introduction \ref{sec:intro}, where $r$ is larger than $R$, and compute the Dispersion. We define the test function:
		$$
		\phi_r^2(x) := \begin{cases} 
			\frac{1}{2r} & x \in [-r, r], \text{ with }  r=\frac{3}{2}R \\
			0 & \text{otherwise}
		\end{cases} %
		$$
		and  the Cat state $
		\psi = \frac{1}{\sqrt{2}} \prod_i \phi_r(x_i + R) + \frac{1}{\sqrt{2}} \prod_i \phi_r(x_i - R) = \frac{\psi_R}{\sqrt{2}} + \frac{\psi_{-R}}{\sqrt{2}}. 
		$
		Let $S$ be the single-particle overlap integral: $S = \int_{-\infty}^{\infty} \phi_r^*(x + R) \phi_r(x - R) \, dx= 1-\frac{R}{r}$. It follows that
		$$ \| \psi \|^2 = 1 + S^N, $$
		meaning the interference terms go to zero with $N$ since $S<1$. For the dispersion  is 
		\begin{equation}\label{eq:D-S}
			D_X =  \langle\left(\sum X_i\right)^2 \rangle = \frac{Nr^2}{3} + N^2 R^2 + N \frac{R^2}{36} S^{N-1}
		\end{equation}
		The interference term $S^{N-1}$ contributes to \eqref{eq:D-S} with a term becoming exponentially small, even without including environments.  This also explains why looking for the ``Infamous Boundary'' by studying interferometric experiments is probably not the best approach.
		
		The mathematical structure to block these types of states is quite different between WFE and Collapse Models. To better appreciate this, we describe the CSL dynamical equation for the wave function defined in \cite{GPR90}. This  is written as
		\begin{equation*}
			d|\psi_t\rangle = \left[ -\frac{i}{\hbar}H dt + \sqrt{\gamma} \int dx (N(x) - \langle N(x) \rangle_t) dW_t(x) - \frac{\gamma}{2} \int dx (N(x) - \langle N(x) \rangle_t)^2 dt \right] |\psi_t\rangle,
		\end{equation*}
		where $H$ is the Hamiltonian describing the standard quantum mechanical dynamics, and $\gamma = \lambda (4\pi/\alpha)^{3/2}$ is a coupling parameter measuring the strength of the collapse, where $\lambda$ is the collapse rate and $\alpha = 1/r_c^2$ is the localization parameter determined by the localization length $r_c$. The term $\langle N(x) \rangle_t$ is the expectation value of the locally averaged particle number density operator $N(x)$, defined as
		\begin{equation}\label{eq:Ndef}
			N(x) = \sum_s \int dy \, g(y-x) a^\dagger(y, s) a(y, s),
		\end{equation}
		where $a^\dagger(y, s)$ and $a(y, s)$ are the creation and annihilation operators for a particle with spin $s$ at point $y$. The smearing function is a Gaussian $g(x) = (\alpha/2\pi)^{3/2} e^{-\frac{\alpha}{2}x^2}$. Finally, the noise term is defined through a real Wiener process $W_t(x)$. The stochastic field continuously kicks the system, inducing a diffusion process that drives the wave function to localize exponentially fast onto the eigenstates of the particle density with a rate $\Lambda= N\lambda$ proportional to the number of particles, while the squared terms preserve the norm.

		WFE models present a Hamiltonian structure with an energy barrier, so the energy is exactly conserved and dynamics is time-reversible. In contrast, collapse models add stochastic diffusive terms and therefore lose this traditional characteristic of physics; energy is continuously pumped into the system and the dynamics are not reversible. In the framework of WFE models, traditional symmetries/variational analysis can be done.This will reflect in a different possible phenomenology to observe, as discussed in subsection \ref{subs:phenomena}. Both dynamics introduce nonlinear terms applied over the whole system that push the system around a definite macroscopic position.
		 
		Collapse models have an intrinsic stochastic nature, while in WFE models ``stochasticity'' is expected to come from deterministic chaos, in the sense of sensitivity on initial conditions, see section \ref{sec:HWMeasurment}. One interpretation of the physical origin of Collapse Models is that they are  gravity-induced; indeed, the  Diósi-Penrose model can be formulated as a CSL model\cite{Collapse}.  This seems to postulate a missing  quantum gravity, I find it difficult to think that a force as weak as gravity  could be the dominant mechanism at ``some'' mesoscale.

		\subsection{Phenomenological Comparison: Collapse Models vs WFEs}\label{subs:phenomena}
		
		The main natural prediction comes from the observations of Q. Fu \cite{Fu97}. He observed that CSL models inevitably induce an increase in energy, with the direct consequence that charged particles should radiate spontaneously. Fu derived the rate of this spontaneous radiation emission for free electrons as:
		\begin{equation}
			\frac{d\Gamma(k)}{dk} = \frac{e^2 \lambda}{4\pi^2 {r_c}^2 m^2 k},
		\end{equation}
		where $k$ is the emitted photon energy, $m$ is the electron mass, and $\lambda$ and $r_c$ are the characteristic CSL parameters. Consequently, X-ray emission spectra serve as an experimental test; these experiments have been used to set bounds on the collapse parameters \cite{Carlesso2022}. In contrast, WFE models do not predict any spontaneous radiation; the main natural prediction for WFE models is the existence of an energy barrier in experiments creating spatial/momentum cat states. As discussed in Section \ref{sec:HWMeasurment}, we should find a critical size $N_c$ such that $ \Delta v \approx w N_c R^2$ where the presence or absence of cat states can be tuned, by tuning the height of the barrier.
		
		\subsubsection{Free Macroscopic Body and Cloud of Atoms}\label{sec:clouds}
		
		The other phenomenon I consider the most characteristic effect of Collapse models is an increase in temperature with consequently induced Brownian motion on the dynamics of any system \cite{Bilardello2016}. Therefore, the interest lies in studying the spatial variances and covariance of the system in a cloud of atoms. To see this, and since it is useful for our comparison, I need to report them for standard Quantum Mechanics following the reference scheme of \cite{GPR90} for a \emph{free} macroscopic body, that is, with no external potentials. Namely, following the standard notation of Quantum Mechanics with brackets (i.e., $O(\psi)=\langle O\rangle $), we look at
		\begin{align}
			D_X &:= \langle X^2 \rangle - \langle X\rangle^2, \\
			D_P &:= \langle P^2\rangle - \langle P \rangle^2, \\
			\text{cov}_{XP} &:= \frac{1}{2}\langle XP+PX \rangle - \langle X \rangle \langle P \rangle,	
		\end{align}
		for the COM of a free macroscopic body of mass $M= Nm$ with energy $E(\psi)=\left\langle \sum \frac{P_i^2}{2m} \right\rangle$. We can do this because for $X_{CM} = \frac{1}{N} \sum X_i$ and $P_{CM} = \sum P_i$ we have $[X_{CM}, P_{CM}] = i\hbar$ and $E(\psi) = \left\langle \frac{P_{CM}^2}{2M} \right\rangle + \frac{1}{2m}\left\langle \sum (P_i - P_{CM}/N)^2 \right\rangle = \langle\psi\left|H_{CM}\right|\psi\rangle + \langle\psi\left|H_{rel}\right|\psi\rangle$; since $[P_{CM}, H_{rel}] = 0$ and $[X_{CM}, H_{rel}] = 0$, the relative motion decouples from the COM motion. This happens also if we add interatomic potentials $U(x_i-x_j)$. In the following discussion, I will leave it to the reader to distinguish when the aspect of having $N$ degrees of freedom is relevant, i.e., whether $X$ refers to just one degree of freedom of mass $M$ or to the $N$ degrees of freedom $\frac{1}{N} \sum X_i$ (with $M=mN$). It is $E_{X-WFE}(\psi)= w N^2 D_X(\psi)$ and  $E_{P-WFE}(\psi)= w  D_P(\psi)$.   We have for Schr\"odinger's equation the following time derivatives: $\frac{d D_P}{dt} = 0$, $\frac{d}{dt}\text{cov}_{XP} = \frac{D_P}{M}$ and $\frac{d^2 D_X}{dt^2} = \frac{2}{M} \frac{d}{dt} \langle \text{cov}_{XP} \rangle = \frac{2}{(M)^2} D_P$, which means
		\begin{align}
			D_P(t) &= D_P(0) \tag{32.A} \label{eq:MQvar.A}, \\
			\operatorname{cov}_{XP}(t) &= \operatorname{cov}_{XP}(0) + \frac{D_P(0)}{M} t \tag{32.B} \label{eq:MQvar.B}, \\
			D_X(t) &= D_X(0) + \frac{2 \operatorname{cov}_{XP}(0)}{M} t + \frac{D_P(0)}{M^2} \, t^2 \tag{32.C} \label{eq:MQvar.C}.
		\end{align}
		We see clearly that the dispersion of the initial momentum will affect the expansion. In particular, these relations are of interest for the free expansion of clouds of cold atoms \cite{Carlesso2022}, which are considered a possible test for deviations from Schr\"odinger's equations. Taking a small $R$, if the cloud starts to expand from a state of two counter-propagating clouds (Momentum Cat):
		\begin{equation}
			\psi = \mathcal{N} \left( \underbrace{\prod_{i=1}^N \phi_r(x_i - R) e^{i k x_i}}_{\text{Cloud at } +R \text{ moving } \rightarrow P} + \underbrace{\prod_{i=1}^N \phi_r(x_i + R) e^{-i k x_i}}_{\text{Cloud at } -R \text{ moving } \leftarrow -P} \right), 
		\end{equation}
		where $k = P/\hbar$, the scalings are
		$ D_X\sim O(r^2/N + R^2)$ and $D_P \sim O(N^2 P^2)$, while for $\psi = \prod_{i} \mathcal{N}_i \left[ \phi_r(x_i - R)e^{i k x_i} + \phi_r(x_i + R)e^{-i k x_i} \right]$ they are $ D_X \sim O\left(\frac{r^2 + R^2}{N}\right)$ and $D_P \sim O(NP^2)$, and for a ``classical state'' $\psi = \prod_i \mathcal{N}_i \,\,\phi_r(x_i - R) e^{i k x_i}$ they are $D_X \sim O\left(\frac{V_{X_i}}{N}\right)$ and $D_P \sim O\left(N D_{P_i}\right)$, where $D_{X_{i}}=\langle X_i^2 \rangle - \langle X_i \rangle^2 $ and $D_{P_i} = \hbar^2 \int |\phi_r'(x)|^2 dx$.
		Moreover, neglecting the cross terms, also the initial covariances have different scalings: $\text{cov}_{XP}$ is of order $O(NRP)$ for cats, $O(RP)$ for product states, and $0$ for classical states. Assuming that we have a protocol like that of \cite{HigbieStamper}, described in section \ref{sec:HWMeasurment}, to generate initial momentum cats, when scaling $N$ larger, the quadratic ballistic term in \eqref{eq:MQvar.C} becomes important at some time scale for the cat state (scaling as $O(1)$ relative to the mass), while remaining $O(1/N)$ if $D_P$ stays of order $N$, making the expansion appear diffusive. This consideration is important for the later comparison with WFE-models.
		
		For CSL models, the modifications to the noise-averaged variances for the COM are given by \cite{GPR90}:
		\begin{equation}\label{eq:CSLvar}
			D^{\text{CSL}}_{X}(t) = D_X(t) + \gamma\delta\frac{\hbar^{2}}{6M^{2}}t^{3}, \quad 
			\text{cov}^{\text{CSL}}_{XP}(t) = \text{cov}_{XP}(t) + \gamma\delta\frac{\hbar^{2}}{4M}t^{2}, \quad 
			D^{\text{CSL}}_{P}(t) = D_{P}(t) + \gamma\delta\frac{\hbar^{2}}{2}t,
		\end{equation}
		under the assumption that the internal wavefunction is sufficiently localized, neglecting the coupling between the COM and internal degrees of freedom caused by the stochastic term. The geometric factor $\delta$ is defined by the integral of the squared gradient of the smoothed density profile (the convolution of the particle density with the Gaussian kernel $g$ in \eqref{eq:Ndef}). The extra terms reflect the propagation of CSL-induced momentum diffusion; as this effect is suppressed for macroscopic masses ($1/M^2$), experimental bounds are often derived from the expansion of cold atomic clouds, where, assuming the atoms move independently, analyses look at \eqref{eq:CSLvar} with the single-atom mass $m$ \cite{Bilardello2016}.
		
		While a similar analysis to that of \cite{Bilardello2016} could be performed for WFE models to set bounds on $w$—a task worthy of future investigation—my focus here is different. Given the previous considerations on \eqref{eq:MQvar.C} and the interest in clouds of momentum cat states (possibly created as described in \cite{HigbieStamper}), I suggest considering the COM dispersions of the entire cloud. In particular, Wavefunction Energy is not possessed by individual atoms but by the system as a whole, whereas in CSL each particle is associated with its own localization function $g$. Therefore, I present below how Eqs. \eqref{eq:MQvar.A}, \eqref{eq:MQvar.B} and \eqref{eq:MQvar.C} change, and discuss what to expect. We start from P-WFE, which is simpler.
		
		\vspace{0.5cm}
		
		For P-WFE, the equations have the same form but with an effective mass $\frac{1}{M^*} = \frac{1}{M} + 2w$:
		\begin{subequations} \label{eq:Pvar_system}
			\begin{align}
				D_P(t) &= D_P(0) \label{eq:Pvar.A} \\
				\operatorname{cov}_{XP}(t) &= \operatorname{cov}_{XP}(0) + \frac{D_P(0)}{M^*} t \label{eq:Pvar.B} \\
				D_X(t) &= D_X(0) + \frac{2 \operatorname{cov}_{XP}(0)}{M^*} t + \frac{D_P(0)}{(M^*)^2} \, t^2 \label{eq:Pvar.C}
			\end{align}
		\end{subequations}
		
		The effective mass $M^*$ is slightly lighter, but the main difference is that in an experiment where  initial states are created as  momentum cats, at some time scale the ballistic term will be negligible when P-WFE becomes dominant, inducing a transition from a diffusive-ballistic regime to a purely diffusive one. This occurs because the energy cost of P-WFE prevents the formation of states with $D_P \sim N^2$, suppressing the coefficient of the quadratic term. It is useful to consider introducing a friction $\frac{dP_i}{dt} = -\frac{\gamma}{m} P_i$, which for a cloud of $N$ particles becomes $\frac{d P}{dt} = -\Gamma P$ where $\Gamma = \gamma/m$. Solving the system $\frac{d D_P}{dt} = -2\Gamma D_P$, $\frac{d}{dt} \text{cov}_{XP} = \frac{D_P}{M^*} - \Gamma \text{cov}_{XP}$ and $\frac{d^2 D_X}{dt^2} + \Gamma \frac{d D_X}{dt} = \frac{2}{(M^*)^2} D_P(t)$, we obtain $D_P(t) = D_P(0)e^{-2\Gamma t}$, $\text{cov}_{XP} (t) = \text{cov}(0) e^{-\Gamma t} + \frac{D_P(0)}{M^*\Gamma} e^{-\Gamma t} (1 - e^{-\Gamma t})$, and $D_X(t) = D_X(0) + \frac{2}{\Gamma M^*} \text{cov}_{XP}(0) (1 - e^{-\Gamma t}) + \frac{D_P(0)}{(M^*)^2 \Gamma^2} (1 - e^{-\Gamma t})^2$. This yields the same scaling considerations as before, but with the cloud settling to an asymptotic limiting size.
		
		\vspace{0.5cm}
		
		In the case of X-WFE, the new equations are coupled and the variances will remain bounded, describing an ellipse in phase space. Since it is not possible to write the equations in the same way for a particle of mass $M$ and for a mass $M=Nm$ of $N$ particles, we write them directly with $N$ explicit (for one degree take $N=1$). Introducing the frequency $\Omega_N = \sqrt{\frac{8wN}{m}}$, we now have the system:
		\begin{subequations} \label{eq:Xvar_system}
			\renewcommand{\theequation}{\theparentequation.\Alph{equation}}
			\begin{align}
				\frac{d}{dt} D_X &= \frac{2}{M} \text{cov}_{XP} \label{eq:Xvar.A} \\
				\frac{d}{dt} D_P &= -4wN^2 \text{cov}_{XP} \label{eq:Xvar.B} \\
				\frac{d}{dt} \text{cov}_{XP} &= \frac{1}{M} D_P - 2wN^2 D_X \label{eq:Xvar.C}
			\end{align}
		\end{subequations}
		Defining the constant energy $\overline{E} = \frac{D_P}{2M} + wN^2 D_X$, we derive the harmonic equation for $D_X$:
		\begin{equation}
			\frac{d^2}{dt^2} D_X + \Omega_N^2 D_X = \frac{4\overline{E}}{M},\,\ \text{ where } \overline{E} = \frac{D_P(0)}{2M} + wN^2 D_X(0),
		\end{equation}
		therefore
		\begin{subequations} \label{eq:Xvar_system_sol}
			\renewcommand{\theequation}{\theparentequation.\Alph{equation}}
			\begin{align}
				D_X(t) &= \frac{\overline{E}}{2wN^2} + A \cos(\Omega_N t) + B \sin(\Omega_N t),\, \text{where }\,A = D_X(0) - \frac{\overline{E}}{2wN^2}\, \text{ and }B = \frac{2 \text{cov}(0)}{M \Omega_N} \label{eq:XDX.A}, \\
				D_P(t) &= D_P(0) + 2M w N^2 (D_X(0) - D_X(t)) \label{eq:XDP.A},\\
			\end{align}
		\end{subequations}
		with a similar oscillatory equation that can be written for $D_P(t)$. These are the equations of a forced harmonic oscillator, and they should be considered only for short times; for longer times it would be appropriate to introduce friction (as before) that will lead the cloud to rest at a final size. So we consider the expansion:
		\begin{equation}\label{eq:Dexp}
			D_X(t) \approx D_X(0) + \frac{2\text{Cov}(0)}{Nm} t + \left[ \frac{D_P(0)}{(Nm)^2} - \frac{2wN}{m} D_X(0) \right] t^2 + O(t^3).
		\end{equation}
		Looking at \eqref{eq:Xvar.C}, there is a competition between the cloud's tendency to expand and the self-trapping due to X-WFE. So the expansion will be slowed by the trapping term $\frac{2wN}{m}D_X(0)$. Assuming the system starts in an initial state such that $D_X(0) = r^2/N$, $D_P = N\hbar^2/4r^2$ and $\text{cov}_{XP} \approx 0$, when WFE is active, the phenomenon to observe in this model is that the cloud is ``self-trapped'' (i.e., the expansion is halted) when
		\begin{equation*}
			2wN^2 D_x(0) \approx \frac{D_P(0)}{Nm}
		\end{equation*}
		giving the critical condition
		\begin{equation}\label{eq:mcritical}
			2w N r^2 \approx \frac{\hbar^2}{4mr^2}.
		\end{equation}
		That is, the derivative \eqref{eq:Xvar.C} becomes negative when $m> \frac{\hbar^2}{8wN r^4}$. Note that the same condition can be obtained from the acceleration term in \eqref{eq:Dexp} becoming negative. 
		Starting from an initial state as described, the phenomenon to observe is that the effect of self-trapping induced by WFE should appear first for larger molecular masses, while still persisting for lighter ones; also, for a given $m$, tuning $N$ around the critical value allows the self-trapping effect to appear and disappear. The condition \eqref{eq:mcritical}, given a mass, can be used to estimate $w$. For example, considering the mass of $^{87}\text{Rb}$ (which are the typical atoms used in these cloud expansion experiments \cite{Kovachy2015}), $r =10^{-7}$ and $N= 10^{9}$; we get $w \approx 10^{-25} \, \text{J/m}^2$.
		
		While both $D_X$ and $D_P$ are bounded since $\overline{E} = \frac{D_P}{2M} + wN^2 D_X$ is constant, and adding friction one can expect that the dispersion $D_P$ does not become of order $N^2$ before the system goes to rest, there is no barrier forbidding initial momentum cat states. The same can be said for Collapse models.
		
		\subsection{Concluding considerations and future investigations}

		The philosophy underlying WFE models is that the spatial separation of a complex of atoms encounters an energy barrier that depends on the number of degrees of freedom. I have illustrated some of the consequences, regarding both experimental tests and novel points of view on Thermodynamics and Chaos. Although this work is still at an initial stage of development, I emphasize its importance. Typically, solutions to the Measurement Problem are restricted to that specific issue and typically do not point to consequences for the rest of Physics; however, if a theory is fundamentally important, it should have implications beyond the mere scope of offering an explanation for the Measurement Problem. Given that the WFE based on angular momentum was rejected in Section \ref{sec:Jcase}, I mentioned in this context the problem of developing Gibbs wavefunction ensembles on a lattice for models with spatial and spin coordinates. Another open question concerns the threshold of chaos given by the DCI condition in Section \ref{sec:HWMeasurment}. In \cite{ContinuumChaos}, it was proved that these types of toy models intermittently enter and exit ``islands of instability''; furthermore, the author provided a generalization of the DCI conditions applicable to continuous spatial models. In this direction, we are currently investigating the implications for models applicable to gases and fluids \cite{ContinuumChaosIslands}. Simulations of the toy model in Section \ref{sec:HWMeasurment} presented in \cite{MP4} showed a ``hockey stick'' graph around the critical estimate \eqref{eq:critest}, indicating that the dispersion is an increasing function of the barrier height, with the explanation that some extra energy is supplied. In contrast, for collapse models, the dispersion is expected to behave like a Heaviside step function, depending only on the size. Therefore, an interesting goal would be to repeat the study of Subsection \ref{sec:clouds} by adding an external potential to the atomic clouds. To complete the comparison with Collapse models, one should also study the behavior of the density matrix in the large $N$ limit—since in Collapse models \cite{GPR90} the density matrix is diagonalized in the position basis—and repeat the study of \cite{Bilardello2016} regarding single-particle dispersion to bound $w$.
		
		In the introduction, I listed principles a), b), c), and d) that must be satisfied. While the case $O_i=L_i+S_i$ was found to fail condition c) in Section \ref{sec:Jcase}, both X-WFE and P-WFE satisfy these principles. In Section \ref{sec:HWMeasurment}, we discussed condition d) for ``spatial-spin'' toy models; however, the conditions presented there are not applicable to continuous spatial models. In \cite{ContinuumChaos}, the instability condition \eqref{eq:DCI} was generalized to a form applicable to continuous models and shown to be verified for particles in a harmonic potential with X-WFE; the same has not yet been investigated for P-WFE.
		
		An important factor in determining which type of WFE to choose is compatibility with relativity. P-WFE appears to be the preferred choice, as it is defined with respect to the center of mass momentum (see Section 4 in \cite{MP1}). However, X-WFE might also be extended, at least to Special Relativity, by recurring to space-time wavefunctions \cite{Dirac1932,Lienert2017,compatibility}, where each particle possesses a space-time coordinate. On the other hand, in General Relativity, global coordinate systems may not exist; hence, the expression $\sum x_i$ may lose meaning, making the concept of a Center of Mass either subtle or impossible to define. We note that \cite{compatibility} addressed the inclusion of P-WFE into Einstein's equations. 
		
		A common objection to modifications of the Schr"odinger equation are no-go theorems \cite{Bassi2003} concerning the violation of relativistic causality. It is not my goal to address this here, other than to observe that such theorems often rest on circular arguments (see Section 7 of \cite{compatibility}). If I have a personal worry is  explaining discrete phenomena like absorption spectra, this has been addressed by A.O. Barut's and collaborators \cite{Barut1991,Barut1992} work in wavefunction physics. Using the non-linear coupling of the electron to its own electromagnetic field, they say to obtain
		(instead of the discrete spectrum) a spectral concentrations or resonances at certain energy values.

		\appendix
		
		\section{Details on COM motions}\label{app:Newton}

		Considering  the functional $F(\psi,\psi^*)=  \left\langle\psi | X_k| \psi\right\rangle $ in   \eqref{eq:hamevo} and $E(\psi, \psi^*)$ given by the Hamiltonian in  \eqref{eq:ScheqManyB},   using integration by parts,  only the kinetic terms with $i=k$ remain:
		\begin{equation}\label{eq:cinematic}
			\frac{d \left\langle\psi | X_k| \psi\right\rangle}{dt}= \frac{\hbar}{2m}\left\{ -i \left\langle  \Delta_k \psi   \left|X_k \right|\psi  \right\rangle + i \left\langle\psi  \left| X_k\right| \Delta_k\psi \right\rangle  \right\}= i\frac{\hbar}{m}\left\langle \frac{\partial\psi}{\partial x_k} \middle| \psi\ket = \frac{1}{m}\bra \psi \dverti{P_k}\psi \ket .
		\end{equation}
		Summing over all $ k $  and averaging over $ N $ it gives
		\begin{equation}\label{eq: PofCOM}
			M\dot{X} = P, \text{ where } M=Nm \text{ and } P:=\left\langle\psi\left|\overset{N}{\underset{i=1}{\sum}}\,P_i\right|\psi\right\rangle .
		\end{equation}
		It is important to note  that this is not enough to call \ref{eq: PofCOM} classical: given an object at an initial position $ X(0) $, it should not be allowed to acquire two opposite momentum. .
		Now I compute $m \frac{d^2 \left\langle\psi | X_k| \psi\right\rangle}{dt^2}$ from \eqref{eq:hamevo} where $F(\psi, \psi^*)= i\frac{\hbar}{m}\left\langle \frac{\partial\psi}{\partial x_k} \left|  \right. \psi\right\rangle$. From $\partial_k^\dagger= -\partial_k$ and $\frac{\partial F}{\partial \psi^*}= \frac{i\hbar}{m}(-\partial_k \psi)$, given a state $\phi$, one has$
		\left\langle \phi \middle| \frac{\partial F(\psi, \psi^*)}{\partial \psi^* }\right\rangle = \frac{i\hbar}{m} \left\langle \partial_k\phi \left| \right. \psi\right\rangle$,
		from which it follows
		\begin{equation*}
			\frac{d^2 \left\langle\psi | X_k| \psi\right\rangle}{dt^2} = \frac{1}{m}\left\langle \partial_k \psi | H\psi\right\rangle +\frac{1}{m}\bra  H\psi \left|\right. \partial_k\psi \ket.
		\end{equation*}
		Observing that $\bra \pa_k \psi \verti W \psi \ket + \bra W\psi \verti \pa_k \psi \ket = - \bra \psi \verti \pa_k W |\psi\ket $, where $W=V+U$,  one arrives at	$	m\frac{d^2 \left\langle\psi | X_k|\psi\right\rangle}{dt^2} = -\bra \psi|\pa_k W |\psi\ket.$
		Summing over all the particles and diving by $N$:
		\begin{equation}\label{eq:COMotion}
			M \ddot{X} = -\sumk \bra \psi | \pa_k W |\psi\ket,\,\, M=mN.
		\end{equation}
		I consider $U= \overset{N}{\underset{i\neq j}{\sum}}u(x_i-x_j)+ \tilde{u}\left(X,y\right)$ and $V=\underset{i=1}{\overset{N}{\sum}} v(x_i) $, where $\tilde{u}$ describe the interaction between the COM and the entagled particle.
		The term for $ \overset{N}{\underset{i\neq j}{\sum}}u(x_i-x_j)$ will disappear, e.g. consider typical interactions like Lennard-Jones or harmonic potential. (The intuitive reason is that the term describes internal bulk forces, so they will not move the COM.  This can be checked for solvable cases like some chains of harmonic oscillators.)  In this way we arrive at 
		\begin{equation}\label{eq:finalNew}
			M\ddot{X} = - \sumk \bra \psi | \pa_k v (x_k) |\psi\ket - \sumk \bra \psi | \pa_k \tilde{u} (X,y) |\psi\ket  .
		\end{equation}

		\section{Proof of spontaneous magnetization in wavefunction ensembles}\label{app:proof}
		
		\def\ooN{{\frac{1}{N}}}
		\def\sdel{{\sqrt{\delta}}}
		\def\mhalf{{-\frac{1}{2}}}
		\def\Amax{{A_{\hbox{max}}}}
		\def\Amin{{A_{\hbox{min}}}}
		\def\sumnN{{\sum_{n=0}^N}}

		Because of the indistinguishability condition we studied mean field models for the quadratic part of Quantum Mechanics $E_N(\psi)$, as  we could not devise a  way to construct lattice spin models for the case of  nearest neighbors. 
		To remove the notion of distinguishable particles  we considered the Curie-Weiss energy:
		\be\label{eq:CWham}
		E_{CW}(S) \= -\frac{1}{N}\, \left(\,\sum_{i=1}^N\,S_i\,\right)^2,
		\ee
		where we replaced spin configurations by \wfs\ with exchange symmetry; in our ensemble we considered only symmetric wavefunction, while one should consider also antisymmetric ones. Also we considered $\pm1$ as spin values, while other ones could be considered.   We adopted the simplest model to learn how to construct some mathematical tools for  these wave-mechanical models and  reasoning that the lowest possible dimensionality of the Hilbert space
		and mean field interaction define the first case to study  whether phase transitions appear or not. 
		Note that with two levels the CW energy depends only on the number, call it `$n$', of ``down" spins.
		Thus we was led to introduce \wfs\ that depend only on `$n$'.

		The ensemble was rewritten as 
		\bar\label{eq:SCWensemble}
		\no [\,F(\phi)\,]_{\beta}= \int_{\{||\phi||^2 = 1\}}d\phi\,\exp\{ -\beta\,E_{CW}(\phi) \}\,
		\frac{F(\phi)}{Z_N},\,\,\, Z_N = \frac{1}{\Sigma_N}\int_{\{||\phi||^2 = 1\}}\,d\phi\,\exp\{ -\beta\,E_{CW}(\phi) \},
		\ear
		where  the magnetic energy associated to $\phi$ becomes:
		\be
		E_{CW}(\phi) \= - \frac{1}{N}\, \sum_{n=0}^N\,|\phi_n|^2\,\left(\,N-2n\,\right)^2
		\ee
		with $ \phi_n $  the component of the wavefunction with $ n  $ "down" spins. At this point, we had observed that
		\be
		E_{CW}(\phi) \= -N\,\left\{\,m^2(\phi) + D(\phi)\,\right\},\label{qcal}
		\ee
		where
		{\small \bar
			\no m(\phi)= \frac{1}{N}\,\sum\,|\phi_n|^2\,\left(\,N-2n\,\right);\,\,\,\, D(\phi)= \frac{1}{N^2}\,\left\{\,\sum\,|\phi|^2\,\left(\,N-2n\,\right)^2 \- 
			\left[\,\sum\,|\phi_n|^2\,\left(\,N-2n\,\right)\,\right]^2 \,\right\}.
			\ear}
		The term $D(\phi)$ is exactly \eqref{eq:genwfe}  with $O_i =S_i$.	 The next ingredient to have a magnetization in the thermodynamics limit, namely suppressing cat states, was introduced as follows.  we defined:
		\be
		f \= N\,\beta\,\left\{\, 1 - m^2(\phi) - D(\phi) + N\,w\,D(\phi)\,\right\}.
		\ee
		Here we have added a term ($N\beta$) to make $f\geq 0$ and incorporated the non-quadratic term $wE_{WFE}(\phi) = wN^2 D(\phi)$ in the energy, where $w$ is a very small constant.
		The intuition for the latter choice comes from the observation that, lacking that term,
		$f$ can be small if  either $m^2$ is large or $D$ is large; incorporating
		the dispersion term with large enough $wN$, the last possibility should be suppressed.
		At the end the model was defined by:
		\bar
		\no [\,m^2\,]_{\beta}= \int_{||\phi||^2 = 1}\,d\phi\,\exp\{ -f(\phi) \}\,m^2(\phi)/Z_N;\,\,\,\,
		Z_N= \frac{1}{\Sigma_N}\int_{||\phi||^2 = 1}\,d\phi\,\exp\{ -f(\phi) \}.
		\ear
		Intuition suggested investigating cases where $w$ is at least $1/N$; 
		hence we defined
		\be
		\omega \= N\,w;
		\ee
		and we assumed that $\omega$ is a constant. This 
		does not indicate a belief that $w$ actually scales with $N$; if such no-quadratic terms have a physical correspondence, then $w$ is a constant  and does not scale.   The role of the assumption is to avoid the suppression of all superpositions,  as it would follow with fixed '$w$' 
		in the mathematical limit of large $ N $ because of the factor $ N^2 $. This limit is a mathematical tool  and 
		our assumption is just a stratagem to prove theorems. We proved  that 
		\be\label{eq:criticalbeta}
		\text{	 there is a positive number $\beta_c=\frac{p^*(\epsilon)}{(\omega-1)\epsilon}$ such that, for $\beta > \beta_c$ and $ \omega>1 $, }\lim_{N \to \infty}\,[m^2]_{\beta} > \epsilon>0.
		\ee
		The factor $ p^*(\epsilon) $ is a large deviation rate functional related to the application of G\"artner-Ellis' theorem\cite{ellis}.
		The main observation used  to treat these ensembles  was to replace the random point on the sphere in \eqref{eq:Z} with 
		\be
		\phi_n \longrightarrow \frac{\phi_n}{\sqrt{\sum_n\,|\phi_n|^2}} ,
		\ee
		where $\{\,\phi_n: n=0,\dots, N\}$ are $N+1$ complex, or 2.$N$ real, numbers  
		distributed as i.i.d. standard (mean zero, norm one) Gaussians.   By this transformation	 the problem was converted into finding probabilities of rare events (called ``Large Deviations" theory \cite{ellis}) with the partition function becoming:
		\begin{equation}
			\no Z_N = c_N\,\int\,\prod_n\,d\chi_n\,\exp\left\{\,- ||\chi||^2/2 - E_N(\chi/||\chi||)\,
			\right\},\,\,\,\,
			||\chi||^2 = \underset{n}{\sum}|\chi_n|^2 .
		\end{equation}
		This observation allowed to exploit large deviation theory in proving magnetization by means of the following lemma: 
		
		\vspace{0.5cm}
		
		\ni Let $\Omega$ be a compact manifold without boundary, $ f $ a real-valued function on $\Omega$, $dx$ a finite
		measure on $\Omega$. Without loss of generality, we can take 
		
		\be
		\int_\Omega\,dx\, \= |\Omega| \= 1,
		\ee
		
		\ni and assume $f \geq 0$. 
		Let, for any bounded $g$ on $\Omega$:
		\bar
		\no [g] = \int_\Omega\,dx\,\exp\{ - f(x)\,\}\,g(x)/Z,\,\,\text{ where } Z = \int_\Omega\,dx\,\exp\{ - f(x)\,\}.
		\ear

		\begin{lemma}[ Concentration lemma] 
			
			Let there be two open subsets of $\Omega$, 
			called $U$ and $V$, and three positive numbers
			$\alpha$, $\eta$, and $\mu$ such that 
			\bar
			\no \hbox{(A)}&\ph\ph V \subset U\hspace{1.5cm} &
			\hbox{(B)}  \ph\ph f(x) \leq \eta, \ph\hbox{for}\ph x \in V; \\
			\no\hbox{(C)}&\ph\ph f(x) \geq  \alpha, \ph\hbox{for}\ph x \notin U, \hspace{1.5cm} & 
			\hbox{(D)}\ph\ph |V| \geq \,\mu.\\
			\ear
			Then: 	
			\be\label{eq:conclemma}
			[g] \= \frac{R+\xi}{1 + \zeta},
			\ee
			\ni where
			\bar
			\no R &= \int_U\,dx\,\exp\{ - f(x)\,\}\,g(x)/Z_U, &Z_U =\int_U\,dx\,\exp\{ - f(x)\,\};\\
			&\no|\xi| \leq e^{- \alpha}\,e^{\eta}\,\mu^{-1}\,||g||, &|\zeta| \leq e^{- \alpha}\,e^{\eta}\,\mu^{-1}.\\
			\ear
		\end{lemma}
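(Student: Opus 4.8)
The plan is to reduce everything to a single algebraic splitting followed by two elementary tail bounds; there is no genuine analytic difficulty, the entire content being the bookkeeping of which hypothesis controls which quantity. First I would split each integral over $\Omega$ into its restriction to $U$ and to the complement $\Omega\setminus U$. Since $\int_U e^{-f}g\,dx = Z_U\,R$ by the definition of $R$, the numerator of $[g]$ becomes $Z_U R + \int_{\Omega\setminus U} e^{-f}g\,dx$ and the denominator becomes $Z = Z_U + \int_{\Omega\setminus U} e^{-f}\,dx$. Dividing numerator and denominator by $Z_U$ then yields at once
\[
[g] = \frac{R + \xi}{1+\zeta}, \qquad \xi := \frac{1}{Z_U}\int_{\Omega\setminus U} e^{-f}g\,dx, \qquad \zeta := \frac{1}{Z_U}\int_{\Omega\setminus U} e^{-f}\,dx,
\]
which is precisely the stated identity \eqref{eq:conclemma}. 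It then remains only to estimate $\xi$ and $\zeta$.

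Both estimates rest on one numerator bound and one denominator bound. For the numerator I would invoke hypothesis (C): on $\Omega\setminus U$ one has $f\geq\alpha$, hence $e^{-f}\leq e^{-\alpha}$, so that $\int_{\Omega\setminus U} e^{-f}\,dx \leq e^{-\alpha}\,|\Omega\setminus U|\leq e^{-\alpha}$ using the normalization $|\Omega|=1$, and similarly $\left|\int_{\Omega\setminus U} e^{-f}g\,dx\right|\leq e^{-\alpha}\,\|g\|$ with $\|g\|$ the supremum norm. For the denominator I would bound $Z_U$ from below by discarding everything outside $V$: since $V\subset U$ by (A) and the integrand is nonnegative, $Z_U\geq\int_V e^{-f}\,dx$; then (B) gives $e^{-f}\geq e^{-\eta}$ on $V$ and (D) gives $|V|\geq\mu$, whence $Z_U\geq e^{-\eta}\mu$, i.e. $1/Z_U\leq e^{\eta}\mu^{-1}$. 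Combining the two bounds gives $|\zeta|\leq e^{-\alpha}e^{\eta}\mu^{-1}$ and $|\xi|\leq e^{-\alpha}e^{\eta}\mu^{-1}\|g\|$, exactly the claimed inequalities.

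The one point worth flagging, rather than an obstacle, is the asymmetric role of the two open sets, which is what makes the argument work. The set $U$ enters only through its complement, where (C) renders the tail exponentially small in $\alpha$; the set $V$ enters only to guarantee that the reference partition function $Z_U$ does not itself collapse, by supplying \emph{mass at least $\mu$ on which $f$ is at most $\eta$}. The conclusion is thus a Laplace-type concentration estimate: $[g]$ equals its conditional average $R$ over $U$ up to errors of relative size $e^{\eta-\alpha}\mu^{-1}$, which become negligible once $\alpha$ is large compared with $\eta$ and $\log(1/\mu)$ — precisely the regime furnished by large-deviation theory in the intended application to \eqref{eq:criticalbeta}.
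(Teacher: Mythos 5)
Your proof is correct and complete: the splitting of numerator and denominator over $U$ and $\Omega\setminus U$, the upper bound $e^{-f}\leq e^{-\alpha}$ off $U$ together with $|\Omega|=1$, and the lower bound $Z_U\geq\int_V e^{-f}\,dx\geq e^{-\eta}\mu$ via (A), (B), (D) are exactly what is needed, and the stated bounds on $\xi$ and $\zeta$ follow. The paper itself only states the lemma with a heuristic discussion of the roles of $U$, $V$, $\alpha$, $\eta$, $\mu$ (deferring the formal argument to the cited reference), and your argument is precisely the standard one that that discussion presupposes.
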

		\ni Here $||g||$ denotes the supremum norm of $g$ on $\Omega$. 
		Note that $R \in \hbox{span}\{\,g(x):\,x \in U\,\}$.

		In the present  case $g(x)$ is  $m^2(\phi)$ and $U$  is the volume of positive magnetization. So the intuition behind this lemma is that  the measure concentrates in  a region of positive magnetization, but how it shrinks to zero in the thermodynamics limit has to be controlled. Hence the roles of the sets $U$ and $V$ and the 
		bounds on $f$. The volume $|V|$ should not be so small as to put a large factor in $\xi$ and $\zeta$; while $\alpha > \eta$. Then $\xi$ and
		$\zeta$ should tend to zero and the measure concentrates on the set $U$. The idea is that $V$ is a small neighborhood of the global
		minimum of $f$. The minimum may not occur at a single point, but on a subset.  
		
		We note that the ``balance of energy and entropy" game is contained in the difference	$\alpha - \eta$ and in $\mu$, measuring how $f$ increases 
		compared with the volume of the manifold that requires. 
		The proof proceeds as follow. 

		Assume  $\omega$ and $\epsilon$ positive numbers satisfy:
		
		\bar
		&& 1 < \omega < 4/3;\label{eq:rangeomega}\\
		&& \epsilon < \frac{1}{4}\,\left(\, 1 + \sqrt{1 - 4\,r}\,\right)^2;\label{epineq}\\
		&& r \= \frac{\omega - 1}{\omega}.
		\ear
		
		\ni (The bound $ 4/3 $ was  required for the application   of G\"artner-Ellis' theorem but  we do not expect that it has physical meaning.  With the specified range for $\omega$, (\ref{epineq}) always holds if $\epsilon < 1/4$)

		\def\ooN{{\frac{1}{N}}}
		\def\sdel{{\sqrt{\delta}}}
		\def\mhalf{{-\frac{1}{2}}}
		\def\Amax{{A_{\hbox{max}}}}
		\def\Amin{{A_{\hbox{min}}}}

		To apply  the Concentration Lemma. We defined  the quantities involved in the Lemma. For the sets $U$ and $V$ we took, given two positive numbers $\epsilon$ and $\eta$ ($\eta$ may depend on $N$),  
		\bar
		U \= \left\{\,\phi:\ph m^2(\phi)\,\geq\, \epsilon\,\right\};\,\,\,\, V \= \left\{\,f \leq \,\eta\,\right\},\,\, \alpha \= \beta\,N\,\left(\,1 - \epsilon\,\right)
		\ear
		The set $ V$ has the equivalent	description of set :
		\begin{equation}
			V = \left\{\,\phi:\ph\ph \omega\,m^2(\phi) \geq (\omega - 1)\,
			\sum\,|\phi_n|^2\,g_n^2\,
			+\,(1-\eta/\beta N)\,\right\} =\left\{\,\phi:\ph\ph m^2(\phi) \geq
			\sum\,|\phi_n|^2\,a_n
			\,\right\}
		\end{equation}
		where $ g_n = 1 - \frac{2\,n}{N} $ and 	$
		\no a_n \= \frac{\omega - 1}{\omega}\,g_n^2 \+ \frac{1-\eta'/\beta}{\omega}
		\no \= r\,g_n^2 \+ \delta.$
		Since $\omega > 1$, assuming $\eta = \eta'\,N$ and $1 - \eta'/\beta > \omega\,\epsilon$,
		the condition defining $V$ implies $V \subset U$.
		Since $\omega > 1$, it follows from $f \= N\,\beta\,\left\{\, 1 - m^2(\phi) + (\omega - 1)\,D(\phi)\,\right\}$ that $f \geq \alpha$ on $U^c$.
		For the lower bound on $|V|$, we translated to a model with i.i.d. Gaussians, call them
		$\chi_n$, replacing:
		
		\be
		\phi_n \longrightarrow\, \frac{\chi_n}{\sqrt{\sum |\chi_n|^2}},
		\ee
		
		\ni and the definition of $V$ becomes:
		
		\be
		V \= \left\{\,\left(\,\sumnN\,|\chi_n|^2\,g_n\,\right)^2 \geq \left[\,\sumnN\,a_n\,|\chi_n|^2\,
		\,\right]\,\sumnN\,|\chi_n|^2\,\right\}.
		\ee
		
		\ni 
		One has to find an exponential lower bound on $P[V]$. Since $a_n \geq \delta$,  a lower bound is 
		\bar
		\no && P\left[\,V\,\right] \geq  P\left[\,\left(\,\ooN\sumnN\,|\chi_n|^2\,g_n\,\right)^2 \geq 
		\left\{\,\ooN\sumnN\,a_n\,|\chi_n|^2\,
		\,\right\}\,\ooN\sumnN\,\frac{a_n}{\delta}\,|\chi_n|^2\,\right] \= \\
		\no && P\left[\,\sqrt{\delta}\,\ooN\sumnN\,|\chi_n|^2\,g_n\, \geq \ooN\sumnN\,a_n\,|\chi_n|^2\,
		\right] \+ P\left[\,\sqrt{\delta}\,\ooN\sumnN\,|\chi_n|^2\,g_n\, \leq - \ooN\sumnN\,a_n\,|\chi_n|^2\,
		\right].\\
		&&
		\ear 
		It is enough work on one the two probabilities, say  the first one.
		According to \Gartner-Ellis, we have to compute:
		\be
		p(\theta) \= \lim_{N\to\infty}\,\ooN\,\log\,{\cal{E}}\exp\{\,\theta\,\sum\,\chi_n^2\,b_n\,\},
		\ee
		\ni where $\cal{E}$ denotes expectation and $
		b_n \= \sdel\,g_n - a_n \= \sdel\,\left(\,1 - \frac{2n}{N}\,\right) - r
		\,\left(\,1 - \frac{2n}{N}\,\right)^2 - \delta.$
		
		Note that $b_n$ can take negative and positive values; hence, 
		$\theta$ must be restricted to ensure that the integral is finite. 
		From the standard Gaussian integral 	$
		p(\theta) \= \mhalf\, \lim_{N\to\infty}\,\sum\,\log\left( 1 - 2\,\theta\,b_n\,\right),
		$
		\ni which, provided the integrand is bounded, we recognize as the Reimann sum 
		converging to:
		$
		p(\theta) \= - \frac{1}{2}\,\int_0^1\,du\,\log\left( 1 - 2\,\theta\,A(1 - 2u)\,\right), \text{ where } A(x) \= \sdel\,x - 
		r\,x^2 \,\- \delta.
		$
		By a change of variable this equals:
		\be
		p(\theta) \= - \frac{1}{4}\,\int_{-1}^{1}\,dx\,\log\left( 1 - 2\,\theta\,A(x)\,\right).
		\ee
		The LD approach requires us to compute:
		\bar
		p^*(y) \= \sup_{\theta}\,\left\{\,\theta\,y - p(\theta)\,\right\}; \underline{p}^* \= \inf_{y > 0}\,p^*(y);
		\ear
		and then the asympotic lower bound is $\exp(-\underline{p}^*\,N\,)$, \cite{ellis}.  By the definition of the domain where $ p(\theta)<+\infty $ in G\"artner-Ellis' theorem, the supremum over $\theta$ in the definition of $p^*(y)$ can be limited to:
		\be
		\frac{1}{2\,\Amin} \leq \theta \leq \frac{1}{2\,\Amax},
		\ee
		\ni where $\Amin$ is negative but we don't need to know it, while by a simple computation:
		$
		\Amax \= \delta\,\left\{\,\frac{4 - 3\omega}{4(\omega -1)}\right\}.
		$
		\ni (This is the max on the whole line).
		We must have positive values of $A(x)$ somewhere in the interval [-1,1], for otherwise 
		$\lim\,p(\theta) = - \infty$ as $\theta \to \infty$, so $p^*(x) = +\infty$ for $x \geq 0$.
		This requires $\Amax >0$ and that the lower root of $A(x) = 0$ lies in the interval [0,1] (since
		$A(0) = - \delta < 0$ and $A'(0) > 0$).
		The root is easily computed to be:
		$
		x_{-} \= \frac{\sdel\,\left(\,1 - \sqrt{1 - 4\,r}\,\right)}{2\,r},
		$
		\ni Since $r < 1/4$,this root is real; 
		letting $u = \sqrt{1 - 4\,r}$ the condition $x_{-} < 1$ becomes 
		\be
		\delta < \frac{1}{4}\,\left(\, 1 + u\,\right)^2,
		\ee
		\ni since eventually  $\delta$ is identified with $\epsilon$,
		the above inequality is identical with (\ref{epineq}). Since $0\leq u \leq 1$, the infimum of the	right side is 1/4.
		
		The problem of computing $p^*(x)$ is then well defined, because logarithmic singularities are integrable. However, the derivative $p'(\theta)$ will go to infinity at the boundaries
		(Ellis calls such a function ``steep" and it is an assumption of his theorem).
		
		Assuming that $0 < \underline{p}^* < \infty$, it will suffice  to know that
		for some $\eta' > 0$:
		\bar
		\underline{p}^* + \eta' \leq \beta\,(\,1 - \epsilon\,);\,\,\,\, \eta' \leq \beta\,(\, 1 - \omega\,\epsilon\,); \label{ineq}
		\ear
		\ni One can define $\eta'$ to saturate the second inequality above 
		(note that $\omega\,\epsilon < 1$), which also yields
		$\delta = \epsilon$. 	The conclusion  \eqref{eq:criticalbeta} theorem follows. 

		\subsubsection*{Why indistinguishable spins and WFE were necessary }
		
		So the proof made the physical assumptions of having indistinguishable spins and the introduction of WFE.  We give  a brief mathematical explanation of why these were needed. 
		
		Without indistinguishability of the spins the phase space has dimension $2\times 2^N$ and when passing to  the Gaussian integral,
		\bar
		\no Z_N \= c_N\,\int\,\prod_S\,d\psi_S\,\exp\left\{\,- ||\psi||^2/2 - E(\psi/||\psi||)/2\,
		\right\};\,\,\,\
		||\psi||^2 \= \sumS.
		\ear
		because the components of $\psi$ are i.i.d. $N(0,1)$,   
		\be
		\sumS \approx a_N = 2\,2^N,
		\ee
		namely we have a ratio of  energies of order $N$ and $ ||\psi||^2 $ of order $2^N$, this kills regardless of the energy any interesting behavior and make the model a simple Gaussian integral.

		Without the WFE instead there was not symmetry breaking because the energy minima were given not only by ``all spins up'' and ``all spins down'' but also by their superposition  . This was shown by showing that, regardless of the temperature,
		\be
		\lim_{N\to \infty}\,\int_{m(\phi) \geq \epsilon}\,d\phi\,\exp\{\, -\,\beta\,E_{}\,\}/Z =  0, \,\,	Z \= \int\,d\phi\,\exp\{\, -\,\beta\,E_{}\,\}.
		\ee
		The integrals were rewritten as
		$
		\int_{m(\phi) \geq \epsilon}\,d\phi\,\exp\{\, -f\,\}/Z, \,\,\, Z = \int\,d\phi\,\exp\{\, -f \,\},
		$
		where
		$
		f = \beta\,N\,\left(\,1 - \sum\,|\phi_n|^2\,g_n^2\,\right);\,\,\,\, g_n = 1 - 2\,n/N.
		$
		
		\ni Where  I  added a term so that $0\leq \,f\,\leq \beta\,N$. So we had to evaluate the ratio of two probabilities: the probability of  the set	$B = \{m(\phi) \geq \epsilon\}$ in the numerator, and $B$ equal to the whole sphere in 	the denominator. Thus for the numerator was estimated:
		\be
		|\,\{ \beta\,N\,\left(\,1 - \sum\,|\phi_n|^2\,g_n^2\,\right) \leq \beta\,N\,x;\, 
		\sum |\phi_n|^2\,g_n \geq \,\epsilon\,\}\,|,
		\ee
		
		\ni which, replacing wavefunctions on the sphere by i.i.d. Gaussians, equivalently: 
		
		\be
		\Plb (1-x)\,\sum \chi_n^2 \leq \sum\,\chi_n^2\,g_n^2; \,\sum\,\chi_n^2\,g_n \geq 
		\epsilon\, \sum \chi_n^2 \Prb,
		\ee
		
		\ni which was the same as writing (introducing factors of $1/N$ ):

		\be
		\Plb 1/N\, \sum \chi_n^2 \,(1 - x - g_n^2) \leq 0; 1/N \,\sum\,\chi_n^2\,(\,g_n - \epsilon\,) 
		\geq 0 \Prb.\label{setB}
		\ee
		This last one has the interpretation of the conditional probability that the magnetization
		is greater than $\epsilon$, given a bound `$x$' on the energy. Again thanks to  the \Gartner-Ellis theorem I concluded that 
		\be
		\lim_{N \to \infty} p_N(x) = \frac{P_{2;N}(x)}{P_{1;N}(x)}=0.
		\ee
		where $P_{2;N}(x)$ stand for the probability of the set appearing above, 
		and $P_{1;N}(x)$ for the probability with the second restriction dropped.

		\section{Auxiliary proofs to DCI conditions}\label{app:chaos_cond}

		\begin{theorem}
			Let $M$ be an even-dimensional ($2\# \times 2\#$) matrix and
			\begin{equation}
				\det M < 0.
			\end{equation}
			Then $M$ has both positive and negative eigenvalues.
		\end{theorem}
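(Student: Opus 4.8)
The plan is to argue directly from the characteristic polynomial via the intermediate value theorem, using only that $M$ is a real matrix (which it is, being the Jacobian of a real Hamiltonian vector field, as in \eqref{eq:linsys}) and that its size $2\#$ is even.

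First I would set $p(\lambda) = \det(\lambda I - M)$, a monic real polynomial of degree $2\#$. Evaluating at the origin gives $p(0) = \det(-M) = (-1)^{2\#}\det M = \det M$, where the sign is $+1$ precisely because the dimension is even; hence the hypothesis \eqref{eq:DCI} reads $p(0) < 0$. I would also record that $\det M \neq 0$ forces $0$ not to be an eigenvalue, so that any real root produced below is automatically nonzero.

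Next I would examine the behaviour of $p$ at infinity. Since $p$ is monic of even degree, $p(\lambda) \to +\infty$ as $\lambda \to +\infty$ and also as $\lambda \to -\infty$. Combined with $p(0) < 0$ and the continuity of $p$ on $\mathbb{R}$, the intermediate value theorem yields a real root $\lambda_+ \in (0, +\infty)$ and a real root $\lambda_- \in (-\infty, 0)$. These are real eigenvalues of $M$ of opposite sign, which is exactly the claim.

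As a cross-check, and an alternative route, one can count eigenvalues: complex eigenvalues of a real matrix occur in conjugate pairs contributing $|\lambda|^2 > 0$ to $\det M = \prod_i \lambda_i$, so the sign of the determinant is carried entirely by the real eigenvalues; since none of them vanishes and the number of real eigenvalues is even (the $2\#$ total minus an even count of complex ones), a negative product over an even collection forces both an odd number of negative and an odd, hence nonzero, number of positive real eigenvalues. There is essentially no serious obstacle here: the only points requiring care are recording that $(-1)^{2\#}=1$ and that $M$ is genuinely real, so that the roots produced are bona fide eigenvalues rather than complex numbers, together with the implicit reading that ``positive and negative eigenvalues'' refers to real eigenvalues of opposite sign, i.e. the expanding and contracting directions sought in \eqref{eq:linsys}.
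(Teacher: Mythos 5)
Your argument is correct and is essentially the paper's own proof: both evaluate the characteristic polynomial at $0$ (using evenness of the dimension so that the constant term equals $\det M$), note that the monic even-degree polynomial tends to $+\infty$ at $\pm\infty$, and invoke the intermediate value theorem to produce one positive and one negative real root. Your additional remarks (realness of $M$, nonvanishing of the roots, the eigenvalue-counting cross-check) only make explicit points the paper leaves implicit.
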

		
		\begin{proof}
			
			The characteristic polynomial $p(\lambda)= \det(M -\lambda I)$ is of even order, with coefficient one, because $M$ has even dimension.  Since the leading term is  $\lambda^{2\#}$  it follows that  $\underset{\lambda \rightarrow \pm \infty}{\lim}p(\lambda)= +\infty$. The constant term is $\det M$; if the latter is negative, $p(\lambda)$ must cross zero at least once for a positive value and once for a negative one.
			
		\end{proof}
		Consider $P=0$; the dynamics on the tangent plane simplify as
		\begin{equation}\label{eq:Msimple}
			\frac{d}{dt} \begin{pmatrix} \xi \\ \eta \end{pmatrix} = M(t) \begin{pmatrix} \xi \\ \eta \end{pmatrix}
			\qquad \text{with} \quad
			M = \begin{pmatrix} A & B \\ C & D \end{pmatrix}
			\quad \text{where} \quad
			_{}	\begin{aligned}
				A &= 0 \\
				D &= 0 \\
				B &= \Lambda  \\
				C &= -\Lambda + v \otimes v
			\end{aligned}.
		\end{equation}

		\begin{lemma}
			For the system  \eqref{eq:Msimple},  the condition $\det M <0$  is equivalent to 
			\begin{equation}\label{eq:DCI2}
				v^t \Lambda^{-1} v > 1.
			\end{equation}
		\end{lemma}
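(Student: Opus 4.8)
The plan is to reduce $\det M$ to a scalar multiple of $1 - v^t\Lambda^{-1}v$ and read off the equivalence from the sign of that scalar. First I would write the matrix from \eqref{eq:Msimple} in its explicit block form
\begin{equation*}
M = \begin{pmatrix} 0 & \Lambda \\ -\Lambda + v\otimes v & 0 \end{pmatrix},
\end{equation*}
an anti-block-diagonal $2\#\times 2\#$ matrix whose two off-diagonal blocks are $\#\times\#$. Swapping the first $\#$ rows with the last $\#$ — a permutation that is a product of $\#$ disjoint transpositions, hence of sign $(-1)^{\#}$ — turns $M$ into the block-diagonal matrix $\mathrm{diag}(-\Lambda + v\otimes v,\ \Lambda)$, so that
\begin{equation*}
\det M = (-1)^{\#}\,\det(\Lambda)\,\det(-\Lambda + v\otimes v).
\end{equation*}

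Next I would clear the remaining signs. Pulling the overall minus out of the second factor gives $\det(-\Lambda + v\otimes v) = (-1)^{\#}\det(\Lambda - v\otimes v)$, and since $(-1)^{\#}\cdot(-1)^{\#}=1$ the two sign factors cancel, leaving
\begin{equation*}
\det M = \det(\Lambda)\,\det(\Lambda - v\otimes v).
\end{equation*}
At this point I would invoke the matrix determinant lemma (the determinant form of Sherman--Morrison) for the rank-one update $v\otimes v = v v^t$, namely $\det(\Lambda - v v^t) = \det(\Lambda)\,(1 - v^t\Lambda^{-1}v)$, which requires $\Lambda$ to be invertible — exactly the hypothesis under which $\Lambda^{-1}$ in \eqref{eq:DCI2} is meaningful. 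Combining the last two displays yields
\begin{equation*}
\det M = [\det(\Lambda)]^2\,(1 - v^t\Lambda^{-1}v).
\end{equation*}

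The conclusion is then immediate: $[\det(\Lambda)]^2 > 0$ for nonsingular $\Lambda$, so the sign of $\det M$ is governed entirely by the scalar factor, and $\det M < 0$ holds if and only if $1 - v^t\Lambda^{-1}v < 0$, i.e. $v^t\Lambda^{-1}v > 1$, which is \eqref{eq:DCI2}. The only real obstacle is the sign bookkeeping: one must carefully track the $(-1)^{\#}$ coming from the block-row swap and the $(-1)^{\#}$ coming from factoring $-1$ out of $-\Lambda + v\otimes v$, and check that they cancel so that the parity of $\#$ drops out of the final criterion. A minor point worth stating explicitly is the standing assumption that $\Lambda = H_{QM} + \mathrm{diag}(f)$ is invertible, which both licenses the matrix determinant lemma and guarantees $[\det\Lambda]^2 \neq 0$, so that the equivalence is a genuine two-way implication rather than only one direction.
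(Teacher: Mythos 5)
Your proof is correct and follows essentially the same route as the paper: both reduce $\det M$ to $(\det\Lambda)^2\,(1-v^t\Lambda^{-1}v)$ and read off the sign, with the rank-one determinant identity (your ``matrix determinant lemma'', the paper's Weinstein--Aronszajn identity) doing the key work. The only cosmetic difference is that you evaluate the anti-block-diagonal determinant by an explicit row permutation, whereas the paper invokes the Schur complement formula with $A=D=0$; the sign bookkeeping and the final criterion are identical.
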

		\begin{proof}
			
			By ``Schur complement'',  $\det M =(-1)^{\#} \det B \,\,\det (C - D B^{-1} A)= (-1)^{\#} \det B \det (C)= (-1)^{\#} \det \Lambda \,\,\det(-\Lambda + v\otimes v)$. Using Weinstein–Aronszajn identity,  $\det(-\Lambda + v\otimes v)= \det(-\Lambda)[1-v^t \Lambda^{-1} \, v]$ , from which follows that 
			\begin{equation}
				\det M= (\det \Lambda)^2 \left[1-v^t \Lambda^{-1} \, v\right],
			\end{equation}
			and so $\det M <0 \iff v^t \Lambda^{-1} \, v > 1 $.
			
		\end{proof}

		\begin{acknowledgments}
			\ni I  thanks William David Wick for the time spent during  the discussions and electronic communications.  I thank CINECA where the first preprint was written in the spare time. I thank Michele Campisi for indicating the references \cite{Anza,Brody}.

			Leonardo De Carlo is a member of the 'Meccanica dei sistemi discreti' section of the Gruppo Nazionale per la Fisica Matematica-Instituto Nazionale di Alta Matematica  \href{https://www.altamatematica.it/gnfm/en/aderenti/aderenti-2024/}{(GNFM-INdAM)}.
			
			This work was partly supported by the Portuguese Science and Technology Foundation FCT, via the research centre GFM, references UID/00208/2025 (\url{https://doi.org/10.54499/UID/00208/2025}) and UID/PRR/00208/2025 (\url{https://doi.org/10.54499/UID/PRR/00208/2025}).
		\end{acknowledgments}
		
		\section*{COI}
		
		\ni The author have no conflict of interest.
		
		\section*{Data Availability Statement}
		
		\ni The data that support the findings of this study are openly available in ArXiv at \cite{MP3,MP4,ContinuumChaos,DotonScreen}.

		\nocite{*}
		\bibliography{aipsamp}

\providecommand{\noopsort}[1]{}\providecommand{\singleletter}[1]{#1}%
\begin{thebibliography}{40}%
\makeatletter
\providecommand \@ifxundefined [1]{%
 \@ifx{#1\undefined}
}%
\providecommand \@ifnum [1]{%
 \ifnum #1\expandafter \@firstoftwo
 \else \expandafter \@secondoftwo
 \fi
}%
\providecommand \@ifx [1]{%
 \ifx #1\expandafter \@firstoftwo
 \else \expandafter \@secondoftwo
 \fi
}%
\providecommand \natexlab [1]{#1}%
\providecommand \enquote  [1]{``#1''}%
\providecommand \bibnamefont  [1]{#1}%
\providecommand \bibfnamefont [1]{#1}%
\providecommand \citenamefont [1]{#1}%
\providecommand \href@noop [0]{\@secondoftwo}%
\providecommand \href [0]{\begingroup \@sanitize@url \@href}%
\providecommand \@href[1]{\@@startlink{#1}\@@href}%
\providecommand \@@href[1]{\endgroup#1\@@endlink}%
\providecommand \@sanitize@url [0]{\catcode `\\12\catcode `\$12\catcode
  `\&12\catcode `\#12\catcode `\^12\catcode `\_12\catcode `\%12\relax}%
\providecommand \@@startlink[1]{}%
\providecommand \@@endlink[0]{}%
\providecommand \url  [0]{\begingroup\@sanitize@url \@url }%
\providecommand \@url [1]{\endgroup\@href {#1}{\urlprefix }}%
\providecommand \urlprefix  [0]{URL }%
\providecommand \Eprint [0]{\href }%
\providecommand \doibase [0]{http://dx.doi.org/}%
\providecommand \selectlanguage [0]{\@gobble}%
\providecommand \bibinfo  [0]{\@secondoftwo}%
\providecommand \bibfield  [0]{\@secondoftwo}%
\providecommand \translation [1]{[#1]}%
\providecommand \BibitemOpen [0]{}%
\providecommand \bibitemStop [0]{}%
\providecommand \bibitemNoStop [0]{.\EOS\space}%
\providecommand \EOS [0]{\spacefactor3000\relax}%
\providecommand \BibitemShut  [1]{\csname bibitem#1\endcsname}%
\let\auto@bib@innerbib\@empty
\bibitem [{\citenamefont {Arndt}\ and\ \citenamefont
  {Hornberger}(2014)}]{ArnHorn}%
  \BibitemOpen
  \bibfield  {author} {\bibinfo {author} {\bibfnamefont {M.}~\bibnamefont
  {Arndt}}\ and\ \bibinfo {author} {\bibfnamefont {K.}~\bibnamefont
  {Hornberger}},\ }\bibfield  {title} {\enquote {\bibinfo {title} {Testing the
  limits of quantum mechanical superpositions},}\ }\href@noop {} {\bibfield
  {journal} {\bibinfo  {journal} {Nat. Phys.}\ }\textbf {\bibinfo {volume}
  {10}},\ \bibinfo {pages} {271--277} (\bibinfo {year} {2014})}\BibitemShut
  {NoStop}%
\bibitem [{\citenamefont {{ERC Advanced Grant}}(2025)}]{Q-Tube}%
  \BibitemOpen
  \bibfield  {author} {\bibinfo {author} {\bibnamefont {{ERC Advanced
  Grant}}},\ }\href@noop {} {\enquote {\bibinfo {title} {Q-tube},}\ }\bibinfo
  {howpublished}
  {\url{https://bist.eu/icfo-researcher-awarded-erc-advanced-grant/}} (\bibinfo
  {year} {2025})\BibitemShut {NoStop}%
\bibitem [{\citenamefont {{ERC Synergy Grant}}(2023)}]{Q-Xtreme}%
  \BibitemOpen
  \bibfield  {author} {\bibinfo {author} {\bibnamefont {{ERC Synergy Grant}}},\
  }\href@noop {} {\enquote {\bibinfo {title} {Q-xtreme},}\ }\bibinfo
  {howpublished} {\url{https://cordis.europa.eu/project/id/951234}} (\bibinfo
  {year} {2023}),\ \bibinfo {note} {grant agreement ID: 951234}\BibitemShut
  {NoStop}%
\bibitem [{\citenamefont {Bassi}, \citenamefont {Dorato},\ and\ \citenamefont
  {Ulbricht}(2023)}]{Collapse}%
  \BibitemOpen
  \bibfield  {author} {\bibinfo {author} {\bibfnamefont {A.}~\bibnamefont
  {Bassi}}, \bibinfo {author} {\bibfnamefont {M.}~\bibnamefont {Dorato}}, \
  and\ \bibinfo {author} {\bibfnamefont {H.}~\bibnamefont {Ulbricht}},\
  }\bibfield  {title} {\enquote {\bibinfo {title} {Collapse models: A
  theoretical, experimental and philosophical review},}\ }\href@noop {}
  {\bibfield  {journal} {\bibinfo  {journal} {Entropy}\ }\textbf {\bibinfo
  {volume} {25}},\ \bibinfo {pages} {645} (\bibinfo {year} {2023})}\BibitemShut
  {NoStop}%
\bibitem [{\citenamefont {Hance}\ and\ \citenamefont
  {Hossenfelder}(2022)}]{HanceHossenfelder2022}%
  \BibitemOpen
  \bibfield  {author} {\bibinfo {author} {\bibfnamefont {J.~R.}\ \bibnamefont
  {Hance}}\ and\ \bibinfo {author} {\bibfnamefont {S.}~\bibnamefont
  {Hossenfelder}},\ }\bibfield  {title} {\enquote {\bibinfo {title} {What does
  it take to solve the measurement problem?}}\ }\href@noop {} {\bibfield
  {journal} {\bibinfo  {journal} {J. Phys. Commun.}\ }\textbf {\bibinfo
  {volume} {6}},\ \bibinfo {pages} {102001} (\bibinfo {year}
  {2022})}\BibitemShut {NoStop}%
\bibitem [{\citenamefont {Wick}(2017)}]{MP1}%
  \BibitemOpen
  \bibfield  {author} {\bibinfo {author} {\bibfnamefont {W.~D.}\ \bibnamefont
  {Wick}},\ }\href@noop {} {\enquote {\bibinfo {title} {On the non-linear
  quantum mechanics and the measurement problem i. blocking cats},}\ }\bibinfo
  {howpublished} {arXiv:1710.03278} (\bibinfo {year} {2017})\BibitemShut
  {NoStop}%
\bibitem [{\citenamefont {De~Carlo}\ and\ \citenamefont
  {Wick}(2023)}]{DeCarlo-Wick}%
  \BibitemOpen
  \bibfield  {author} {\bibinfo {author} {\bibfnamefont {L.}~\bibnamefont
  {De~Carlo}}\ and\ \bibinfo {author} {\bibfnamefont {W.~D.}\ \bibnamefont
  {Wick}},\ }\bibfield  {title} {\enquote {\bibinfo {title} {On magnetic models
  in wavefunction ensembles},}\ }\href@noop {} {\bibfield  {journal} {\bibinfo
  {journal} {Entropy}\ }\textbf {\bibinfo {volume} {25}},\ \bibinfo {pages}
  {564} (\bibinfo {year} {2023})}\BibitemShut {NoStop}%
\bibitem [{\citenamefont {Leggett}(1980)}]{Leggett}%
  \BibitemOpen
  \bibfield  {author} {\bibinfo {author} {\bibfnamefont {A.~J.}\ \bibnamefont
  {Leggett}},\ }\bibfield  {title} {\enquote {\bibinfo {title} {Macroscopic
  quantum systems and the quantum theory of measurement},}\ }\href@noop {}
  {\bibfield  {journal} {\bibinfo  {journal} {Supplement to the Progress of
  Theoretical Physics}\ }\textbf {\bibinfo {volume} {69}},\ \bibinfo {pages}
  {90} (\bibinfo {year} {1980})}\BibitemShut {NoStop}%
\bibitem [{\citenamefont {Schrödinger}(1952)}]{ES1}%
  \BibitemOpen
  \bibfield  {author} {\bibinfo {author} {\bibfnamefont {E.}~\bibnamefont
  {Schrödinger}},\ }\href@noop {} {\emph {\bibinfo {title} {Statistical
  Thermodynamics}}}\ (\bibinfo  {publisher} {Dover Publications, Inc.},\
  \bibinfo {address} {New York},\ \bibinfo {year} {1952})\BibitemShut {NoStop}%
\bibitem [{\citenamefont {Schrödinger}(1928)}]{ES2}%
  \BibitemOpen
  \bibfield  {author} {\bibinfo {author} {\bibfnamefont {E.}~\bibnamefont
  {Schrödinger}},\ }\enquote {\bibinfo {title} {The exchange of energy
  according to wave-mechanics},}\ in\ \href@noop {} {\emph {\bibinfo
  {booktitle} {Collected Papers on Wave Mechanics}}}\ (\bibinfo  {publisher}
  {Blackie \& Son Limited},\ \bibinfo {address} {London and Glasgow},\ \bibinfo
  {year} {1928})\ pp.\ \bibinfo {pages} {137--146}\BibitemShut {NoStop}%
\bibitem [{\citenamefont {Bloch}(2000)}]{Bloch}%
  \BibitemOpen
  \bibfield  {author} {\bibinfo {author} {\bibfnamefont {F.}~\bibnamefont
  {Bloch}},\ }\href@noop {} {\emph {\bibinfo {title} {Fundamentals of
  Statistical Mechanics, prepared by J.D. Walecka}}}\ (\bibinfo  {publisher}
  {Imperial College Press, World Scientific},\ \bibinfo {year}
  {2000})\BibitemShut {NoStop}%
\bibitem [{\citenamefont {Campisi}(2013)}]{Campisi}%
  \BibitemOpen
  \bibfield  {author} {\bibinfo {author} {\bibfnamefont {M.}~\bibnamefont
  {Campisi}},\ }\bibfield  {title} {\enquote {\bibinfo {title} {Quantum
  fluctuation relations for ensembles of wave functions},}\ }\href@noop {}
  {\bibfield  {journal} {\bibinfo  {journal} {New Journal of Physics}\ }\textbf
  {\bibinfo {volume} {15}},\ \bibinfo {pages} {115008} (\bibinfo {year}
  {2013})}\BibitemShut {NoStop}%
\bibitem [{\citenamefont {Jona-Lasinio}\ and\ \citenamefont
  {Presilla}(2006)}]{Jona-Presilla}%
  \BibitemOpen
  \bibfield  {author} {\bibinfo {author} {\bibfnamefont {G.}~\bibnamefont
  {Jona-Lasinio}}\ and\ \bibinfo {author} {\bibfnamefont {C.}~\bibnamefont
  {Presilla}},\ }\bibfield  {title} {\enquote {\bibinfo {title} {On the
  statistics of quantum expectations for systems in thermal equilibrium},}\
  }in\ \href@noop {} {\emph {\bibinfo {booktitle} {Quantum Mechanics: Are there
  Quantum Jumps? and On the Present Status of Quantum Mechanics}}},\ Vol.\
  \bibinfo {volume} {844},\ \bibinfo {editor} {edited by\ \bibinfo {editor}
  {\bibfnamefont {A.}~\bibnamefont {Bassi}}, \bibinfo {editor} {\bibfnamefont
  {D.}~\bibnamefont {Dürr}}, \bibinfo {editor} {\bibfnamefont
  {T.}~\bibnamefont {Weber}}, \ and\ \bibinfo {editor} {\bibfnamefont
  {N.}~\bibnamefont {Zanghì}}}\ (\bibinfo  {publisher} {American Institute of
  Physics},\ \bibinfo {address} {Melville, NY},\ \bibinfo {year} {2006})\ pp.\
  \bibinfo {pages} {200--205}\BibitemShut {NoStop}%
\bibitem [{\citenamefont {Lebowitz}(2021)}]{lebowitz}%
  \BibitemOpen
  \bibfield  {author} {\bibinfo {author} {\bibfnamefont {J.}~\bibnamefont
  {Lebowitz}},\ }\href@noop {} {\enquote {\bibinfo {title} {Microscopic origin
  of macroscopic behavior},}\ }\bibinfo {howpublished} {arXiv:2105.03470}
  (\bibinfo {year} {2021})\BibitemShut {NoStop}%
\bibitem [{\citenamefont {Anza}\ and\ \citenamefont
  {Crutchfield}(2022)}]{Anza}%
  \BibitemOpen
  \bibfield  {author} {\bibinfo {author} {\bibfnamefont {F.}~\bibnamefont
  {Anza}}\ and\ \bibinfo {author} {\bibfnamefont {J.~P.}\ \bibnamefont
  {Crutchfield}},\ }\bibfield  {title} {\enquote {\bibinfo {title} {Geometric
  quantum thermodynamics},}\ }\href@noop {} {\bibfield  {journal} {\bibinfo
  {journal} {Phys. Rev. E}\ }\textbf {\bibinfo {volume} {106}},\ \bibinfo
  {pages} {054102} (\bibinfo {year} {2022})}\BibitemShut {NoStop}%
\bibitem [{\citenamefont {Brody}\ and\ \citenamefont {Hughston}(1998)}]{Brody}%
  \BibitemOpen
  \bibfield  {author} {\bibinfo {author} {\bibfnamefont {D.~C.}\ \bibnamefont
  {Brody}}\ and\ \bibinfo {author} {\bibfnamefont {L.~P.}\ \bibnamefont
  {Hughston}},\ }\bibfield  {title} {\enquote {\bibinfo {title} {The quantum
  canonical ensemble},}\ }\href@noop {} {\bibfield  {journal} {\bibinfo
  {journal} {J. Math. Phys.}\ }\textbf {\bibinfo {volume} {39}},\ \bibinfo
  {pages} {6502--6508} (\bibinfo {year} {1998})}\BibitemShut {NoStop}%
\bibitem [{\citenamefont {Wick}(2018)}]{MP3}%
  \BibitemOpen
  \bibfield  {author} {\bibinfo {author} {\bibfnamefont {W.~D.}\ \bibnamefont
  {Wick}},\ }\href@noop {} {\enquote {\bibinfo {title} {On the non-linear
  quantum mechanics and the measurement problem iii. poincaré},}\ }\bibinfo
  {howpublished} {arXiv:1803.11236} (\bibinfo {year} {2018})\BibitemShut
  {NoStop}%
\bibitem [{\citenamefont {Wick}(2025{\natexlab{a}})}]{ContinuumChaos}%
  \BibitemOpen
  \bibfield  {author} {\bibinfo {author} {\bibfnamefont {W.~D.}\ \bibnamefont
  {Wick}},\ }\href@noop {} {\enquote {\bibinfo {title} {Chaos in a nonlinear
  wavefunction model: An alternative to born's probability hypothesis},}\
  }\bibinfo {howpublished} {arXiv:2502.02698} (\bibinfo {year}
  {2025}{\natexlab{a}})\BibitemShut {NoStop}%
\bibitem [{\citenamefont {Abdi}\ \emph {et~al.}(2016)\citenamefont {Abdi},
  \citenamefont {Degenfeld-Schonburg}, \citenamefont {Sameti}, \citenamefont
  {Navarrete-Benlloch},\ and\ \citenamefont {Hartmann}}]{Abdietl}%
  \BibitemOpen
  \bibfield  {author} {\bibinfo {author} {\bibfnamefont {M.}~\bibnamefont
  {Abdi}}, \bibinfo {author} {\bibfnamefont {P.}~\bibnamefont
  {Degenfeld-Schonburg}}, \bibinfo {author} {\bibfnamefont {M.}~\bibnamefont
  {Sameti}}, \bibinfo {author} {\bibfnamefont {C.}~\bibnamefont
  {Navarrete-Benlloch}}, \ and\ \bibinfo {author} {\bibfnamefont {M.~J.}\
  \bibnamefont {Hartmann}},\ }\bibfield  {title} {\enquote {\bibinfo {title}
  {Dissipative optomechanical preparation of macroscopic quantum superposition
  states},}\ }\href@noop {} {\bibfield  {journal} {\bibinfo  {journal} {Phys.
  Rev. Lett.}\ }\textbf {\bibinfo {volume} {116}},\ \bibinfo {pages} {233604}
  (\bibinfo {year} {2016})}\BibitemShut {NoStop}%
\bibitem [{\citenamefont {Wick}(2025{\natexlab{b}})}]{DotonScreen}%
  \BibitemOpen
  \bibfield  {author} {\bibinfo {author} {\bibfnamefont {W.~D.}\ \bibnamefont
  {Wick}},\ }\href@noop {} {\enquote {\bibinfo {title} {That dot on the screen:
  also, what about born? and other objections to wavefunction physics},}\
  }\bibinfo {howpublished} {arXiv:2504.17808} (\bibinfo {year}
  {2025}{\natexlab{b}})\BibitemShut {NoStop}%
\bibitem [{\citenamefont {Press}\ \emph {et~al.}(1988)\citenamefont {Press},
  \citenamefont {Teukolsky}, \citenamefont {Vetterling},\ and\ \citenamefont
  {Flannery}}]{NumRec}%
  \BibitemOpen
  \bibfield  {author} {\bibinfo {author} {\bibfnamefont {W.~H.}\ \bibnamefont
  {Press}}, \bibinfo {author} {\bibfnamefont {S.~A.}\ \bibnamefont
  {Teukolsky}}, \bibinfo {author} {\bibfnamefont {W.~T.}\ \bibnamefont
  {Vetterling}}, \ and\ \bibinfo {author} {\bibfnamefont {B.~P.}\ \bibnamefont
  {Flannery}},\ }\href@noop {} {\emph {\bibinfo {title} {Numerical Recipes in
  C}}}\ (\bibinfo  {publisher} {Cambridge University Press},\ \bibinfo
  {address} {Cambridge, UK},\ \bibinfo {year} {1988})\BibitemShut {NoStop}%
\bibitem [{\citenamefont {Tao}(2016)}]{Tao}%
  \BibitemOpen
  \bibfield  {author} {\bibinfo {author} {\bibfnamefont {M.}~\bibnamefont
  {Tao}},\ }\bibfield  {title} {\enquote {\bibinfo {title} {Explicit symplectic
  approximation of nonseparable hamiltonians: algorithm and long-time
  performance},}\ }\href@noop {} {\bibfield  {journal} {\bibinfo  {journal}
  {Phys. Rev. E}\ }\textbf {\bibinfo {volume} {94}},\ \bibinfo {pages} {043303}
  (\bibinfo {year} {2016})}\BibitemShut {NoStop}%
\bibitem [{\citenamefont {Wick}(2019)}]{MP4}%
  \BibitemOpen
  \bibfield  {author} {\bibinfo {author} {\bibfnamefont {W.~D.}\ \bibnamefont
  {Wick}},\ }\href@noop {} {\enquote {\bibinfo {title} {On non-linear quantum
  mechanics and the measurement problem iv. experimental tests},}\ }\bibinfo
  {howpublished} {arXiv:1908.02352} (\bibinfo {year} {2019})\BibitemShut
  {NoStop}%
\bibitem [{\citenamefont {Ashkin}\ \emph {et~al.}(1986)\citenamefont {Ashkin},
  \citenamefont {Dziedzic}, \citenamefont {Bjorkholm},\ and\ \citenamefont
  {Chu}}]{Ashkin1986}%
  \BibitemOpen
  \bibfield  {author} {\bibinfo {author} {\bibfnamefont {A.}~\bibnamefont
  {Ashkin}}, \bibinfo {author} {\bibfnamefont {J.~M.}\ \bibnamefont
  {Dziedzic}}, \bibinfo {author} {\bibfnamefont {J.~E.}\ \bibnamefont
  {Bjorkholm}}, \ and\ \bibinfo {author} {\bibfnamefont {S.}~\bibnamefont
  {Chu}},\ }\bibfield  {title} {\enquote {\bibinfo {title} {Observation of a
  single-beam gradient force optical trap for dielectric particles},}\
  }\href@noop {} {\bibfield  {journal} {\bibinfo  {journal} {Opt. Lett.}\
  }\textbf {\bibinfo {volume} {11}},\ \bibinfo {pages} {288--290} (\bibinfo
  {year} {1986})}\BibitemShut {NoStop}%
\bibitem [{\citenamefont {Higbie}\ and\ \citenamefont
  {Stamper-Kurn}(2004)}]{HigbieStamper}%
  \BibitemOpen
  \bibfield  {author} {\bibinfo {author} {\bibfnamefont {J.}~\bibnamefont
  {Higbie}}\ and\ \bibinfo {author} {\bibfnamefont {D.~M.}\ \bibnamefont
  {Stamper-Kurn}},\ }\bibfield  {title} {\enquote {\bibinfo {title} {Generating
  macroscopic-quantum-superposition states in momentum and internal-state space
  from bose-einstein condensates with repulsive interactions},}\ }\href@noop {}
  {\bibfield  {journal} {\bibinfo  {journal} {Phys. Rev. A}\ }\textbf {\bibinfo
  {volume} {69}},\ \bibinfo {pages} {053605} (\bibinfo {year}
  {2004})}\BibitemShut {NoStop}%
\bibitem [{\citenamefont {Sadler}\ \emph {et~al.}()\citenamefont {Sadler},
  \citenamefont {Higbie}, \citenamefont {Hetherington}, \citenamefont {Schmid},
  \citenamefont {Pasienski},\ and\ \citenamefont {Stamper-Kurn}}]{Sadler2004}%
  \BibitemOpen
  \bibfield  {author} {\bibinfo {author} {\bibfnamefont {L.}~\bibnamefont
  {Sadler}}, \bibinfo {author} {\bibfnamefont {J.}~\bibnamefont {Higbie}},
  \bibinfo {author} {\bibfnamefont {C.}~\bibnamefont {Hetherington}}, \bibinfo
  {author} {\bibfnamefont {S.}~\bibnamefont {Schmid}}, \bibinfo {author}
  {\bibfnamefont {M.}~\bibnamefont {Pasienski}}, \ and\ \bibinfo {author}
  {\bibfnamefont {D.~M.}\ \bibnamefont {Stamper-Kurn}},\ }\enquote {\bibinfo
  {title} {Periodically-dressed {Bose-Einstein} condensate in $^{87}${Rb}},}\
  \BibitemShut {NoStop}%
\bibitem [{\citenamefont {Ghirardi}, \citenamefont {Rimini},\ and\
  \citenamefont {Weber}(1986)}]{GRW86}%
  \BibitemOpen
\bibfield  {title} {  }\bibfield  {author} {\bibinfo {author} {\bibfnamefont
  {G.~C.}\ \bibnamefont {Ghirardi}}, \bibinfo {author} {\bibfnamefont
  {A.}~\bibnamefont {Rimini}}, \ and\ \bibinfo {author} {\bibfnamefont
  {T.}~\bibnamefont {Weber}},\ }\bibfield  {title} {\enquote {\bibinfo {title}
  {Unified dynamics for microscopic and macroscopic systems},}\ }\href@noop {}
  {\bibfield  {journal} {\bibinfo  {journal} {Phys. Rev. D}\ }\textbf {\bibinfo
  {volume} {34}},\ \bibinfo {pages} {470} (\bibinfo {year} {1986})}\BibitemShut
  {NoStop}%
\bibitem [{\citenamefont {Ghirardi}, \citenamefont {Pearle},\ and\
  \citenamefont {Rimini}(1990)}]{GPR90}%
  \BibitemOpen
  \bibfield  {author} {\bibinfo {author} {\bibfnamefont {G.~C.}\ \bibnamefont
  {Ghirardi}}, \bibinfo {author} {\bibfnamefont {P.}~\bibnamefont {Pearle}}, \
  and\ \bibinfo {author} {\bibfnamefont {A.}~\bibnamefont {Rimini}},\
  }\bibfield  {title} {\enquote {\bibinfo {title} {Markov processes in hilbert
  space and continuous spontaneous localization of systems of identical
  particles},}\ }\href@noop {} {\bibfield  {journal} {\bibinfo  {journal}
  {Phys. Rev. A}\ }\textbf {\bibinfo {volume} {42}},\ \bibinfo {pages} {78}
  (\bibinfo {year} {1990})}\BibitemShut {NoStop}%
\bibitem [{\citenamefont {Fu}(1997)}]{Fu97}%
  \BibitemOpen
  \bibfield  {author} {\bibinfo {author} {\bibfnamefont {Q.}~\bibnamefont
  {Fu}},\ }\bibfield  {title} {\enquote {\bibinfo {title} {Spontaneous
  radiation of free electrons in a nonrelativistic collapse model},}\
  }\href@noop {} {\bibfield  {journal} {\bibinfo  {journal} {Phys. Rev. A}\
  }\textbf {\bibinfo {volume} {56}},\ \bibinfo {pages} {1806} (\bibinfo {year}
  {1997})}\BibitemShut {NoStop}%
\bibitem [{\citenamefont {Carlesso}\ \emph {et~al.}(2022)\citenamefont
  {Carlesso}, \citenamefont {Donadi}, \citenamefont {Ferialdi}, \citenamefont
  {Paternostro}, \citenamefont {Ulbricht},\ and\ \citenamefont
  {Bassi}}]{Carlesso2022}%
  \BibitemOpen
  \bibfield  {author} {\bibinfo {author} {\bibfnamefont {M.}~\bibnamefont
  {Carlesso}}, \bibinfo {author} {\bibfnamefont {S.}~\bibnamefont {Donadi}},
  \bibinfo {author} {\bibfnamefont {L.}~\bibnamefont {Ferialdi}}, \bibinfo
  {author} {\bibfnamefont {M.}~\bibnamefont {Paternostro}}, \bibinfo {author}
  {\bibfnamefont {H.}~\bibnamefont {Ulbricht}}, \ and\ \bibinfo {author}
  {\bibfnamefont {A.}~\bibnamefont {Bassi}},\ }\bibfield  {title} {\enquote
  {\bibinfo {title} {Present status and future challenges of
  non-interferometric tests of collapse models},}\ }\href@noop {} {\bibfield
  {journal} {\bibinfo  {journal} {Nat. Phys.}\ }\textbf {\bibinfo {volume}
  {18}},\ \bibinfo {pages} {243--250} (\bibinfo {year} {2022})}\BibitemShut
  {NoStop}%
\bibitem [{\citenamefont {Bilardello}\ \emph {et~al.}(2016)\citenamefont
  {Bilardello}, \citenamefont {Donadi}, \citenamefont {Vinante},\ and\
  \citenamefont {Bassi}}]{Bilardello2016}%
  \BibitemOpen
  \bibfield  {author} {\bibinfo {author} {\bibfnamefont {M.}~\bibnamefont
  {Bilardello}}, \bibinfo {author} {\bibfnamefont {S.}~\bibnamefont {Donadi}},
  \bibinfo {author} {\bibfnamefont {A.}~\bibnamefont {Vinante}}, \ and\
  \bibinfo {author} {\bibfnamefont {A.}~\bibnamefont {Bassi}},\ }\bibfield
  {title} {\enquote {\bibinfo {title} {Bounds on collapse models from cold-atom
  experiments},}\ }\href@noop {} {\bibfield  {journal} {\bibinfo  {journal}
  {Physica A}\ }\textbf {\bibinfo {volume} {462}},\ \bibinfo {pages} {764--782}
  (\bibinfo {year} {2016})}\BibitemShut {NoStop}%
\bibitem [{\citenamefont {Kovachy}\ \emph {et~al.}(2015)\citenamefont
  {Kovachy}, \citenamefont {Hogan}, \citenamefont {Sugarbaker}, \citenamefont
  {Dickerson}, \citenamefont {Donnelly}, \citenamefont {Overstreet},\ and\
  \citenamefont {Kasevich}}]{Kovachy2015}%
  \BibitemOpen
  \bibfield  {author} {\bibinfo {author} {\bibfnamefont {T.}~\bibnamefont
  {Kovachy}}, \bibinfo {author} {\bibfnamefont {J.~M.}\ \bibnamefont {Hogan}},
  \bibinfo {author} {\bibfnamefont {A.}~\bibnamefont {Sugarbaker}}, \bibinfo
  {author} {\bibfnamefont {S.~M.}\ \bibnamefont {Dickerson}}, \bibinfo {author}
  {\bibfnamefont {C.~A.}\ \bibnamefont {Donnelly}}, \bibinfo {author}
  {\bibfnamefont {C.}~\bibnamefont {Overstreet}}, \ and\ \bibinfo {author}
  {\bibfnamefont {M.~A.}\ \bibnamefont {Kasevich}},\ }\bibfield  {title}
  {\enquote {\bibinfo {title} {Matter wave lensing to picokelvin
  temperatures},}\ }\href@noop {} {\bibfield  {journal} {\bibinfo  {journal}
  {Phys. Rev. Lett.}\ }\textbf {\bibinfo {volume} {114}},\ \bibinfo {pages}
  {143004} (\bibinfo {year} {2015})}\BibitemShut {NoStop}%
\bibitem [{\citenamefont {Wick}\ and\ \citenamefont
  {De~Carlo}(2025)}]{ContinuumChaosIslands}%
  \BibitemOpen
  \bibfield  {author} {\bibinfo {author} {\bibfnamefont {W.~D.}\ \bibnamefont
  {Wick}}\ and\ \bibinfo {author} {\bibfnamefont {L.}~\bibnamefont
  {De~Carlo}},\ }\href@noop {} {\enquote {\bibinfo {title} {In preparation
  update of: Islands of instability in nonlinear wavefunction models in the
  continuum: A different route to "chaos"},}\ }\bibinfo {howpublished}
  {arXiv:2512.09109} (\bibinfo {year} {2025})\BibitemShut {NoStop}%
\bibitem [{\citenamefont {Dirac}, \citenamefont {Fock},\ and\ \citenamefont
  {Podolsky}(1932)}]{Dirac1932}%
  \BibitemOpen
  \bibfield  {author} {\bibinfo {author} {\bibfnamefont {P.~A.~M.}\
  \bibnamefont {Dirac}}, \bibinfo {author} {\bibfnamefont {V.~A.}\ \bibnamefont
  {Fock}}, \ and\ \bibinfo {author} {\bibfnamefont {B.}~\bibnamefont
  {Podolsky}},\ }\bibfield  {title} {\enquote {\bibinfo {title} {On quantum
  electrodynamics},}\ }\href@noop {} {\bibfield  {journal} {\bibinfo  {journal}
  {Phys. Z. Sowjetunion}\ }\textbf {\bibinfo {volume} {2}},\ \bibinfo {pages}
  {468--479} (\bibinfo {year} {1932})},\ \bibinfo {note} {reprinted in:
  \textit{Selected Papers on Quantum Electrodynamics}, J. Schwinger, ed.,
  Dover, New York (1958)}\BibitemShut {NoStop}%
\bibitem [{\citenamefont {Lienert}, \citenamefont {Petrat},\ and\ \citenamefont
  {Tumulka}(2017)}]{Lienert2017}%
  \BibitemOpen
  \bibfield  {author} {\bibinfo {author} {\bibfnamefont {M.}~\bibnamefont
  {Lienert}}, \bibinfo {author} {\bibfnamefont {S.}~\bibnamefont {Petrat}}, \
  and\ \bibinfo {author} {\bibfnamefont {R.}~\bibnamefont {Tumulka}},\
  }\bibfield  {title} {\enquote {\bibinfo {title} {Multi-time wave
  functions},}\ }\href@noop {} {\bibfield  {journal} {\bibinfo  {journal} {J.
  Phys.: Conf. Ser.}\ }\textbf {\bibinfo {volume} {880}},\ \bibinfo {pages}
  {012006} (\bibinfo {year} {2017})}\BibitemShut {NoStop}%
\bibitem [{\citenamefont {Wick}(2020)}]{compatibility}%
  \BibitemOpen
  \bibfield  {author} {\bibinfo {author} {\bibfnamefont {W.~D.}\ \bibnamefont
  {Wick}},\ }\href@noop {} {\enquote {\bibinfo {title} {On non-linear quantum
  mechanics, space-time wavefunctions, and compatibility with general
  relativity},}\ }\bibinfo {howpublished} {arXiv:2008.08663} (\bibinfo {year}
  {2020})\BibitemShut {NoStop}%
\bibitem [{\citenamefont {Bassi}\ and\ \citenamefont
  {Ghirardi}(2003)}]{Bassi2003}%
  \BibitemOpen
  \bibfield  {author} {\bibinfo {author} {\bibfnamefont {A.}~\bibnamefont
  {Bassi}}\ and\ \bibinfo {author} {\bibfnamefont {G.}~\bibnamefont
  {Ghirardi}},\ }\bibfield  {title} {\enquote {\bibinfo {title} {Dynamical
  reduction models},}\ }\href@noop {} {\bibfield  {journal} {\bibinfo
  {journal} {Phys. Rep.}\ }\textbf {\bibinfo {volume} {379}},\ \bibinfo {pages}
  {257--426} (\bibinfo {year} {2003})}\BibitemShut {NoStop}%
\bibitem [{\citenamefont {Barut}(1991)}]{Barut1991}%
  \BibitemOpen
  \bibfield  {author} {\bibinfo {author} {\bibfnamefont {A.~O.}\ \bibnamefont
  {Barut}},\ }\bibfield  {title} {\enquote {\bibinfo {title} {Brief history and
  recent developments in electron theory and quantum electrodynamics},}\ }in\
  \href@noop {} {\emph {\bibinfo {booktitle} {The Electron}}},\ \bibinfo
  {editor} {edited by\ \bibinfo {editor} {\bibfnamefont {D.}~\bibnamefont
  {Hestenes}}\ and\ \bibinfo {editor} {\bibfnamefont {A.}~\bibnamefont
  {Weingartshofer}}}\ (\bibinfo  {publisher} {Kluwer Academic Publishers},\
  \bibinfo {year} {1991})\BibitemShut {NoStop}%
\bibitem [{\citenamefont {Barut}(1992)}]{Barut1992}%
  \BibitemOpen
  \bibfield  {author} {\bibinfo {author} {\bibfnamefont {A.~O.}\ \bibnamefont
  {Barut}},\ }\bibfield  {title} {\enquote {\bibinfo {title} {Nonlinear
  nonlocal classical field theory of quantum phenomena},}\ }\href@noop {}
  {\bibfield  {journal} {\bibinfo  {journal} {Int. J. Eng. Sci.}\ }\textbf
  {\bibinfo {volume} {30}},\ \bibinfo {pages} {1469--1473} (\bibinfo {year}
  {1992})}\BibitemShut {NoStop}%
\bibitem [{\citenamefont {Ellis}(2009)}]{ellis}%
  \BibitemOpen
  \bibfield  {author} {\bibinfo {author} {\bibfnamefont {R.}~\bibnamefont
  {Ellis}},\ }\href@noop {} {\enquote {\bibinfo {title} {The theory of large
  deviations},}\ } (\bibinfo {year} {2009}),\ \bibinfo {note} {see page 72 and
  citation there}\BibitemShut {NoStop}%
\end{thebibliography}%
		
	\end{document}